\theoremstyle{plain}
\newtheorem{theorem}{Theorem}
\newtheorem{lemma}{Lemma}
\theoremstyle{definition}
\theoremstyle{remark}
\begin{document}



\title{Contraction based stabilization of nonlinear singularly perturbed systems and application to high gain feedback}

\author{
\name{M M Rayguru\textsuperscript{a}$^{\ast}$\thanks{$^\ast$Corresponding author. Email: mmrayguru87@gmail.com}
and I N Kar\textsuperscript{a}}
\affil{\textsuperscript{a} Electrical Engineering Department, IIT Delhi, New Delhi, 110016, India;
}}

\maketitle

\begin{abstract}
Recent development of contraction theory  based analysis of singularly perturbed system has opened the door for inspecting differential  behavior of multi time-scale systems. In this paper a contraction theory based framework is proposed for stabilization of singularly perturbed systems. The primary objective is to design a feedback controller to achieve bounded tracking error for both standard and non-standard singularly perturbed systems. This framework provides relaxation over traditional quadratic Lyapunov based method as there is no need to satisfy interconnection conditions during controller design algorithm. Moreover, the stability bound does not depend on smallness  of singularly perturbed parameter. Combined with high gain scaling, the proposed technique is shown to assure contraction of approximate feedback linearizable systems. These findings extend the class of nonlinear systems which can be made contracting.
\end{abstract}

\begin{keywords}
Contraction Theory, Singular Perturbation, High Gain Feedback, Approximate Feedback Linearizable Systems, Composite Controller Design
\end{keywords}

\section{Introduction}
Multi time-scale modeling and study of singularly perturbed systems find application in model order reduction, optimal control, stochastic filtering and composite control etc.  \cite{Kokotovic1986}. A two time scale singularly perturbed system consists of an interconnection of two dynamical systems referred as slow and fast subsystems. Generally, we refer a singularly perturbed systems as standard models if there exists a unique root for the fast subsystem when the  perturbation parameter goes to zero \cite{sab1}. Whereas in nonstandard models, fast system will have multiple roots or without any root \cite{nar1}.\par
Quadratic Lyapunov function has been effectively utilized for stability analysis and controller design for singularly perturbed systems \cite{sab2,son1}. For the purpose of analysis the overall system is separated into two reduced order models by setting the perturbation parameter to zero.  Stability of each reduced system is investigated by selection of two appropriate quadratic Lyapunov functions. Subsequently the convex sum of these two functions (Composite Lyapunov Function) is employed to assure stability of the overall system. The resulting stability bounds are valid for certain range of the perturbation parameter depending on the interconnection conditions satisfied by the  Lyapunov functions \cite{sab1}. In addition to solving regulation problems quadratic Lyapunov functions have been efficient on examining closed loop stability of output feedback controllers, high gain feedback \cite{son1}, dynamic surface control  etc. However the composite Lyapunov approach encounters complication in analyzing nonstandard models. Indirect manifold construction with a modified composite control law is used for stabilization of nonstandard problems  \cite{nar1} and references therein. Nevertheless it is not always easy to search for two quadratic Lyapunov functions which should satisfy all the interconnection conditions  and moreover presence of uncertainties further complicates the search process. The condition for stability is a sufficient one and hence there is no guarantee of stability beyond the critical value of perturbation parameter.\par
Recently a differential form of stability analysis, namely contraction theory is proposed in \cite{lohmiller1}. In agreement with this hypothesis all the trajectories of a contracting dynamical system exponentially converge towards each other irrespective of their initial condition \cite{sontag1, angeli}. The region in the state space is called contraction region, if every trajectory starting inside the region will converge towards each other \cite{parrilo}. Contraction framework does not necessitates the presence of an attractor a priori, however a contracting autonomous system indirectly assures the presence of an equilibrium point \cite{lohmiller1}. The exponential convergence property is inherently robust to bounded disturbances and hence easier to deal with uncertainties in system model \cite{sontag1}. These interesting properties are utilized for analysis of mechanical systems \cite{slotine7}, stability of networks, observer design \cite{juo3}, synchronization \cite{slotine6}, Kalman filter, frequency estimator design \cite{sharma3}, backstepping controller synthesis \cite{zamani1, juo1} etc.\par
Moreover contraction framework is extended to analyze singularly perturbed systems \cite{delv} and its application to retroactive attenuation in biological systems \cite{delv2}. These results are employed for stabilization of approximate feedback linearizable systems. Partial contraction analysis and robustness of contraction property is exploited to derive new stability bounds for singularly perturbed nonlinear systems in these works. The procedure is recursive and can be extended to three or multi time scale systems. The stability bounds obtained hold for a broad range of perturbation parameter rather than a small range found in quadratic Lyapunov based methods. Therefore contraction framework based analysis of singular perturbed system provides less conservative bounds compared to conventional Lyapunov methods. \\
In this paper we will show, how contraction theory tools can be adopted for stabilization problems in standard and nonstandard singularly- perturbed systems. The use of contraction tools completely circumvents the need of interconnection conditions and guarantees convergence behavior for a broad range of perturbation parameter. Thus the proposed method provide guaranteed stability bounds for a wide range of perturbation parameter which is difficult to obtain using quadratic Lyapunov based formalism. The  design procedure is also extended to high gain scaling based control law for a class of approximate feedback linearizable systems. For these cases, parameter selection for controller and convergence analysis is guaranteed in the formalism of contraction theory. The method presented in this paper will complement the composite controller design approach when searching for quadratic Lyapunov functions satisfying all the interconnection condition becomes difficult.\\ The paper is organized as follows. The motivation and problem formulation is discussed in the first section followed by some discussion on contraction theory. Stabilization of standard and nonstandard singularly perturbed systems  are derived next. Application of these results to high gain feedback controller design for approximate feedback linearizable systems is presented in subsequent section. Finally in the last section we present simulation results for some examples.\\

Throughout this paper, we adopt the following notations and symbols. $B_x, B_z$ denote compact subsets, $R^m$ denotes a m-dimensional real vector space. For real vectors $v$, $||v||$ denotes the Euclidean norm and for real matrix $||E||$ denotes induced matrix norm. A metric $\Theta$ denotes a symmetric positive definite matrix and $I_n$ is an $n \times n$ identity matrix. \par

\section{Motivation and Problem Formulation}

\subsection{Motivation} The composite controller design approach for standard singularly perturbed systems is discussed through an example. The system is described as 
\begin{subequations}\label{mot1}
 \begin{equation}\label{reg1}
 \dot{x}=f(x,z)=xz^3, x \in B_x=[-1, 1]
 \end{equation}
 \begin{equation}\label{mote1}
 \mu\dot{z}= g(z,x,u)=z +u, z \in B_z=[-1/2, 1/2]
 \end{equation}
 \end{subequations}
\quad where $\mu$ is small positive number less than one. The design goal is to stabilize the system around the origin using composite Lyapunov function based technique discussed in \cite{sab1,nar1}. The composite control law $u$ consists of a slow ($u_a$) and a fast component ($u_b$) which are selected in a recursive manner. We divide the procedure into three distinct steps mentioned below. \\
\emph{Stabilization of Reduced Slow System:} The slow component is selected under the assumption that, there exists a slow manifold for the model and all the fast states $z$ have converged to this manifold. Equating $\mu=0$ in \eqref{mot1}, the root of the fast subsystem or the slow manifold is given by,
\[z=z_{ds}=h(x,u)=-u.\]   Note that, the fast component $u_b$ of the control law vanishes when $z\rightarrow h(x,u)$. The slow component $u_a$ has to be selected in such a way that the reduced system will be stable. 
A control law $u_a=x^{4/3}$ will stabilize the reduced slow system $\dot{x}=f(x,h(x,u))=-x^2$ around origin. For this choice of $u_a$, a candidate Lyapunov function $V(x)=\frac{1}{2}x^6$ will satisfy the following inequality\\
\begin{equation}
\dot{V} \leq \alpha_1 \psi(x)^2
\end{equation}
where $\alpha_1=1,\psi(x)=||x||^5$.\\
\emph{Stabilization of Boundary Layer System:} The boundary layer system for \eqref{mot1} is written as:
\[\dot{y}=y+u_b\] where $y=z-h(x,u)$. A choice of $u_b=-3(z-x^{4/3})$ will achieve stability of boundary layer system using Lyapunov function $W=\frac{1}{2}(z-z_{ds})^2$. The derivative of $W$ along the trajectories of boundary layer system will follow
\begin{equation}
\dot{W} \leq \alpha_2 \phi(z-z_{ds})^2
\end{equation}
where $\alpha_2=2$ and $\phi(z-z_{ds})=||z-z_{ds}||$.\\
\emph{Interconnection Conditions:} Overall system stability of \eqref{mot1} is examined by selecting a composite Lyapunov function which is a convex sum of $V$ and $W$. Moreover to assure asymptotic stability  of  \eqref{mot1}, $V(x)$ and $W(y)$ must satisfy the following interconnection conditions.
\begin{equation}\label{intc}
\begin{split}
& \frac{\partial{V}}{\partial{x}}(f(x,y+h(x))-f(x,h(x))) \leq \beta_1\phi(x)\psi(y)\\
& \frac{\partial{W}}{\partial{y}}(f(x,y+h(x))) \leq \beta_2\phi(x)\psi(y)+\beta_3\psi(y)^2
\end{split}
\end{equation}
In the region ($B_x \times B_z$) given in \eqref{mot1}, the choice of scalars $\beta_1=7/4,\beta_2=4/3, \beta_3=7/3$ will satisfy the interconnection conditions \eqref{intc}. The maximum value of perturbation parameter for which the system \eqref{mot1} is asymptotically stable depends on the interconnection conditions and selection of composite Lyapunov function. The maximum bound of perturbation parameter $(\mu= 0.4246)$ can be achieved by selecting a composite Lyapunov function $V_1=(1-d) V+d W$ where $d=21/47$.\par
From above discussion it can be concluded that, composite control approach provides an elegant and step by step design for stabilization problems. The idea is to reduce the complexity of the overall system by converting it into two reduced order models and thereafter sensibly selecting two components of the control law for two reduced systems. The stability of the overall system hinges on the selection of Lyapunov functions $V$, $W$ and interconnection conditions. Moreover the condition for stability is a sufficient one and the system \eqref{mot1} may be stable beyond the maximum predicted range of $\mu$. The closed loop system for \eqref{mot1} is simulated for a choice of $\mu=0.5$ and the result is shown in figure 1.
\begin{figure}[htbp]
     \centering
      \includegraphics[width=3.25in,height=2.20in]{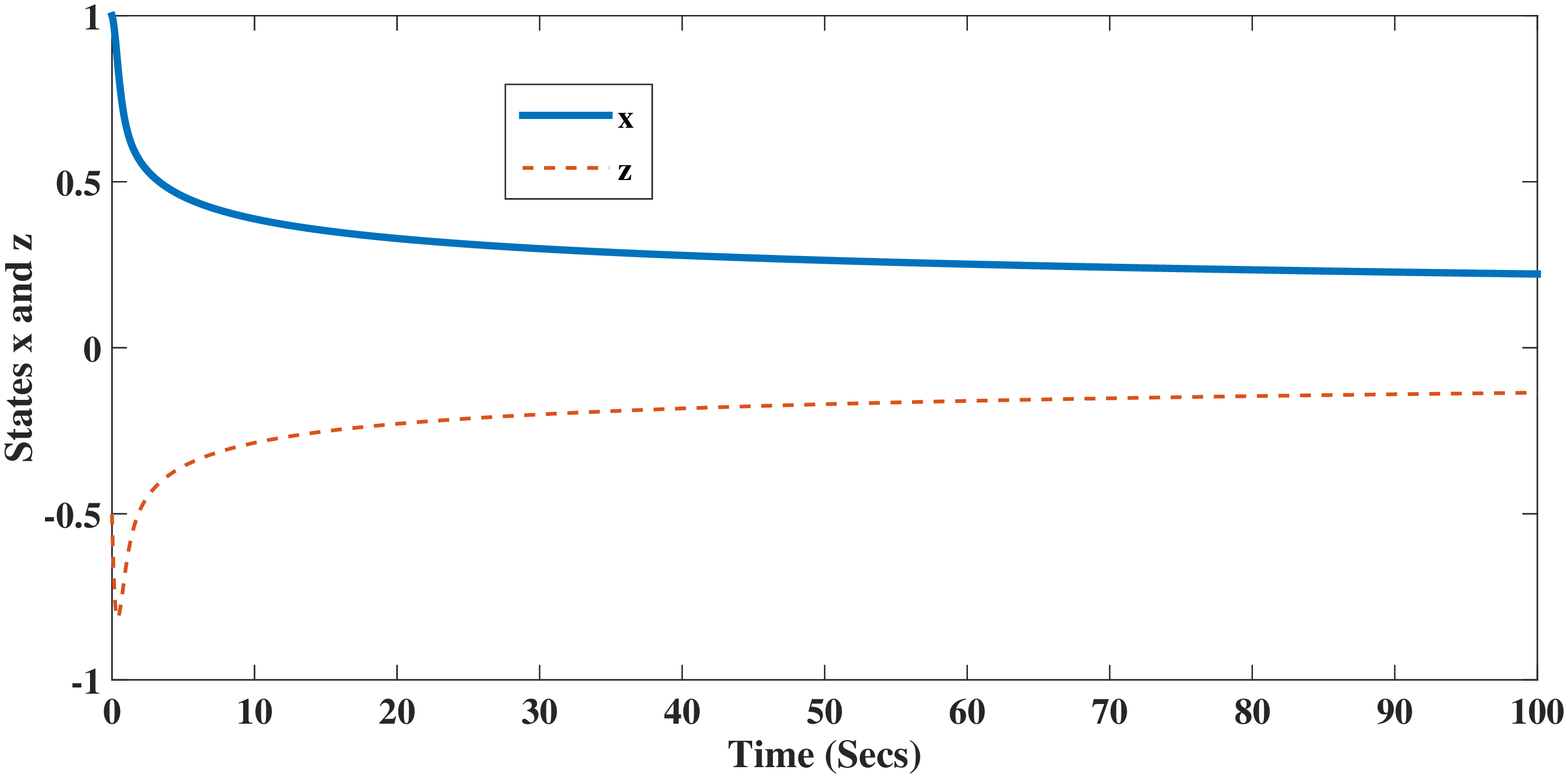}
     \caption{ Closed Loop System Response for $\mu=0.5$}
      \label{standard1}
\end{figure} 
It is hard to conclude asymptotic stability of \eqref{mot1} from the figure but it is clear that the closed loop system trajectories are converging in the neighborhood of origin. This behavior of trajectories can not be concluded from composite Lyapunov function approach. A proportionate stability bound with respect to $\mu$ may give some relaxation to control design in many practical cases where a uniform ultimate boundedness is the design requirement. In high gain observer based output feedback design  or high gain feedback based control designs \cite{son1}, closed loop system stability is guaranteed only for a very small range of perturbation parameter. These restriction can be relaxed if the stability of the closed loop system can be assured for a wide range of perturbation parameter.\\
Also in presence of systems uncertainties, searching for Lyapunov functions satisfying all the interconnection condition is a difficult task. In this paper we are proposing a contraction theory framework for the stabilization of singularly perturbed system addressing these stability issues. The control algorithm proposed in this paper retains the idea of reducing the model order by equating the perturbation parameter to zero. However there is no need to search for Lyapunov functions satisfying interconnection conditions.\par 
\newpage
\subsection{Problem Formulation} In this paper, we investigate the stabilization of \eqref{mot3} in contraction theory framework. Consider the standard singularly perturbed systems described as: 
 \begin{subequations}\label{mot3}
 \begin{equation}\label{sys1s}
 \dot{\bm x}=\it \bf f (\bm x, \bm z, u)
 \end{equation}
 \begin{equation}\label{sys1sf}
 \mu\dot{\bm z}=\it \bf g (\bm x, \bm z,\mu, u)
 \end{equation}
 \end{subequations}
where $\bm x \in \rm \mathbb B_x \subset \rm \mathbb R^n, \bm z \in \rm \mathbb B_z \subset \rm \mathbb R^m$, $u \in R$ and $\mu \in [0, 1]$. The functions $f(.)$, $g(.)$ are assumed to be smooth and Lipschitz in their arguments. We solve the following problems.\\
  $\rm I)$ Design a control law  $\it u=\it u_1(\it \bf x)+\it u_2(\it \bf x, \it \bf z)$ such that the trajectories of the overall system converge to an ultimate bound irrespective of the magnitude of the perturbation parameter. \\
 $\rm II)$ Further we show that the proposed control law is robust against bounded disturbances and derive the convergence bounds depending on the magnitude of the disturbance term.\\
  $\rm III)$ Investigate the special cases for which the trajectories of \eqref{mot3} exponentially converge to an equilibrium.\\
  $\rm IV)$ Explore the design steps required for stabilization of nonstandard singularly perturbed models and derive the convergence bounds.\\
  $\rm V)$ Exploit the proposed approach to design high gain feedback controllers for the stabilization of approximate feedback linearizable systems.
\section{Prerequisites From Contraction Theory}

A system of the form $\dot{\bm x}=\it \bf f (\bm x, t)$ is said to be contracting if all trajectories starting inside some region in state space will converge to each other exponentially somehow forgetting their initial conditions or disturbances \cite{lohmiller1}. Existence of such region is sufficient  for guarantying contraction behavior in a dynamical system.  A region in the state space is called a contraction region for the system, if the following inequality is satisfied for $\forall t >0$
\begin{equation}\label{cont1}
\rm \bf F=\it (\dot{\rm \bf \Theta}+\rm \bf \Theta\frac{\upartial{\it \bf f}}{\upartial{\it \bf x}})\rm \bf \Theta^{-1} \leq -\it \lambda \rm \bf  I
\end{equation}
where $\rm \bf \Theta$ is a nonsingular metric, $\it \lambda>0$ is a positive constant referred as contraction rate, $\rm \bf F$ is defined as generalized Jacobian. This condition can also be expressed in an inequality form as:
\begin{equation}\label{econt}
(\dot{\rm \bf M}+\rm \bf M\frac{\upartial{\it \bf f}}{\upartial{\it \bf x}}+\frac{\upartial{\it \bf f}}{\upartial{\it \bf x}}^{T}\rm \bf M) \leq -2\it \lambda \rm \bf M
\end{equation}
where $\rm \bf M(\it \bf x,t)=\rm \bf \Theta^T\rm \bf \Theta$ is an uniformly positive definite matrix. The system is said to be contracting in a metric $\rm \bf M$ with a rate $\it \lambda$, when inequality \eqref{econt} is satisfied.  When this inequality is moderated into a negative semidefinite condition, the system is said to be semi-contracting.  Some important results and observations from previous literatures are outlined in the form of following lemmas whose proofs can be found in \cite{lohmiller1,parrilo,sontag1}.\par
\begin{lemma}\label{lem1} 
Suppose an autonomous system $\dot{\it \bf x}=\it \bf f(\it \bf x)$ is globally contracting with a nonsingular metric $\rm \bf \Theta(\it \bf x)$ then all the trajectories of this system will converge to an unique equilibrium point. $\Diamond$ 
\end{lemma}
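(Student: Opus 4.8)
The plan is to work with the infinitesimal displacement $\delta \mathbf{x}$ between two neighbouring trajectories and to show that its length measured in the metric $\mathbf{M}=\mathbf{\Theta}^{T}\mathbf{\Theta}$ contracts exponentially. Integrating this along a path that joins two arbitrary trajectories then forces every pair of trajectories to converge to one another, and a time--shift argument specific to autonomous systems identifies the common limit as a single equilibrium. The uniqueness will then drop out of the same contraction estimate applied to constant trajectories.

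First I would study the evolution of the squared differential length $V=\delta \mathbf{x}^{T}\mathbf{M}\,\delta \mathbf{x}$. Differentiating the flow $\dot{\mathbf{x}}=\mathbf{f}(\mathbf{x})$ gives the variational equation $\frac{d}{dt}\delta\mathbf{x}=\frac{\partial \mathbf{f}}{\partial \mathbf{x}}\delta \mathbf{x}$, and combining this with the defining contraction inequality \eqref{econt} yields, by a direct differentiation,
\begin{equation}
\dot V=\delta \mathbf{x}^{T}\left(\dot{\mathbf{M}}+\mathbf{M}\frac{\partial \mathbf{f}}{\partial \mathbf{x}}+\left(\frac{\partial \mathbf{f}}{\partial \mathbf{x}}\right)^{T}\mathbf{M}\right)\delta \mathbf{x}\le -2\lambda\,V ,
\end{equation}
so that $V(t)\le V(0)\,e^{-2\lambda t}$. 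Integrating the associated line element $\sqrt{\delta \mathbf{x}^{T}\mathbf{M}\,\delta \mathbf{x}}$ along any smooth curve connecting two trajectories, and using the uniform positive definiteness of $\mathbf{M}$ to sandwich this metric distance between multiples of the Euclidean distance, I would conclude that the distance between any two trajectories of the system decays at least like $e^{-\lambda t}$. Hence all trajectories converge to each other exponentially, irrespective of initial condition.

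To extract a single equilibrium I would exploit autonomy. Fix any solution $\mathbf{x}(t)$; for $h>0$ the shifted curve $\mathbf{x}(t+h)$ is again a solution of the same autonomous system, so the contraction estimate gives $\|\mathbf{x}(t+h)-\mathbf{x}(t)\|\to 0$ as $t\to\infty$ for every $h$. This makes $\mathbf{x}(t)$ Cauchy, so it converges to some $\mathbf{x}^{\ast}$. Continuity of $\mathbf{f}$ then forces $\dot{\mathbf{x}}(t)=\mathbf{f}(\mathbf{x}(t))\to \mathbf{f}(\mathbf{x}^{\ast})$; were this limit nonzero the trajectory would drift without bound, contradicting convergence, so $\mathbf{f}(\mathbf{x}^{\ast})=0$ and $\mathbf{x}^{\ast}$ is an equilibrium. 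Uniqueness is immediate: any two equilibria are constant trajectories whose mutual metric distance is time--invariant, which is compatible with the exponential decay established above only if the two equilibria coincide.

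The main obstacle is the passage from the infinitesimal contraction of $V$ to a genuine bound on the finite distance between two possibly far apart trajectories. This requires integrating the metric line element along a connecting path and controlling how that path is transported by the flow, and it is precisely here that \emph{global} contraction (the inequality holding on all of state space rather than locally) together with the uniform bounds on $\mathbf{M}$ become indispensable. A secondary technical point is justifying $\dot{\mathbf{x}}(t)\to 0$ from $\mathbf{x}(t)\to\mathbf{x}^{\ast}$, which rests on the continuity and Lipschitz smoothness of $\mathbf{f}$ assumed in the problem formulation and is most cleanly handled by a Barbalat--type argument.
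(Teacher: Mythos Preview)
The paper does not actually prove Lemma~\ref{lem1}; it explicitly states just before the lemma that the proofs of Lemmas~1--3 ``can be found in \cite{lohmiller1,parrilo,sontag1}.'' So there is no in-paper argument to compare against, and your outline is essentially the standard proof one finds in those references.

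One technical point deserves tightening. You write that ``$\|\mathbf{x}(t+h)-\mathbf{x}(t)\|\to 0$ as $t\to\infty$ for every $h$'' and conclude that $\mathbf{x}(t)$ is Cauchy. Pointwise vanishing of these differences does not by itself imply Cauchy: the contraction estimate gives $\|\mathbf{x}(t+h)-\mathbf{x}(t)\|\le C e^{-\lambda t}\|\mathbf{x}(h)-\mathbf{x}(0)\|$, and you would still need a uniform-in-$h$ bound on $\|\mathbf{x}(h)-\mathbf{x}(0)\|$, which you have not established. The clean fix---and the one used in \cite{lohmiller1}---is to note that for an autonomous system the velocity $\dot{\mathbf{x}}=\mathbf{f}(\mathbf{x})$ itself satisfies the variational equation $\frac{d}{dt}\dot{\mathbf{x}}=\frac{\partial \mathbf{f}}{\partial \mathbf{x}}\dot{\mathbf{x}}$, so your first display applied with $\delta\mathbf{x}=\dot{\mathbf{x}}$ gives $\|\dot{\mathbf{x}}(t)\|_{\mathbf{M}}\le e^{-\lambda t}\|\dot{\mathbf{x}}(0)\|_{\mathbf{M}}$. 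Exponential decay of the velocity then yields $\|\mathbf{x}(t+h)-\mathbf{x}(t)\|\le\int_t^{\infty}\|\dot{\mathbf{x}}(s)\|\,ds\le C e^{-\lambda t}$ uniformly in $h$, which is genuinely Cauchy, and $\mathbf{f}(\mathbf{x}^{\ast})=0$ follows immediately without any Barbalat-type detour. The rest of your argument (path integration for the finite-distance bound, uniqueness via constant trajectories) is correct and standard.
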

Contraction of a dynamical system points to the local behavior of differential displacements of its trajectories. From the local analysis of the virtual displacements a fair idea about the global behavior can be drawn for the system under consideration. Furthermore contraction provides  inherent robustness to bounded uncertainties affecting the system. The robustness property of contraction in a perturbed nonlinear system is summarized in the form of the following lemma. \par
\begin{lemma}\label{lem2}
Define a perturbed system of the following form.
\begin{equation} \label{pertsys}
\dot{\it \bf x_p}= \it \bf f(\it \bf x_p, t)+\it \bf d(\it \bf x_p, t)
\end{equation} 
Suppose the system $\dot{\it \bf x}=\it \bf f(\it \bf x, t)$ is contracting using a nonsingular metric $\rm \bf \Theta$ with a rate $\it \lambda$. Then the following two cases arise.\\
a) Assume the Jacobian of the perturbation term $\||\frac{\upartial{\it \bf d(\it \bf x_p, t)}}{\upartial{\it \bf x_p}}\|| \leq \it \lambda$ for $\forall t > 0$, then the perturbed system is still contracting and the trajectories of the perturbed system will exponentially converge to the trajectories of nominal(unperturbed) system. i.e\\
\begin{equation}\label{lem21}
\lim_{t \to \infty} ||\it \bf x_p(t)-x(t)|| \rightarrow 0
\end{equation}
b) If the perturbation term is bounded, then the difference between trajectories of the perturbed system and the nominal system will converge to a steady state bound given as: 
\begin{equation}\label{lem22}
\lim_{t \to \infty} ||\it \bf x_p(t)-x(t)|| \leq \frac{\it \chi \it d}{\it \lambda}
\end{equation}
where $\it \chi$ is the condition number of the metric $\rm \bf \Theta$ and $||\it \bf d(\it \bf x, t)|| \leq \it d$.
$\Diamond$
\end{lemma}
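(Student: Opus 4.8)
The plan is to work entirely in the differential (virtual displacement) setting of contraction theory and to track the infinitesimal squared length $V=\delta x^{T} M\,\delta x$, where $M=\Theta^{T}\Theta$ as in \eqref{econt}. First I would record the basic decay estimate that drives both claims: along the nominal flow the virtual displacement obeys $\dot{\delta x}=\frac{\partial f}{\partial x}\delta x$, so that, differentiating $V$ and substituting the contraction inequality \eqref{econt}, one obtains
\begin{equation}
\dot V \le -2\lambda V ,
\end{equation}
and hence $\|\Theta\,\delta x(t)\|\le e^{-\lambda t}\|\Theta\,\delta x(0)\|$. This exponential contraction of the length element is the engine behind both parts of the lemma.

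To compare the two genuinely distinct trajectories $x(t)$ of the nominal system and $x_p(t)$ of \eqref{pertsys}, I would join their initial states by a smooth path and propagate it, producing a one-parameter family of solutions $x(s,t)$, $s\in[0,1]$, with $x(0,t)=x(t)$ governed by $f$ and $x(1,t)=x_p(t)$ governed by $f+d$. The separation of the endpoints is then controlled by the Riemannian path length $L(t)=\int_0^1\|\Theta\,w\|\,ds$, where $w=\partial x/\partial s$ satisfies the variational equation of the propagated family. The finite Euclidean distance is recovered at the end through $\|x_p-x\|\le\|\Theta^{-1}\|\,L$, while the disturbance is measured through $\|\Theta\|$; it is exactly this pairing that manufactures the condition number $\chi=\|\Theta\|\,\|\Theta^{-1}\|$ appearing in \eqref{lem22}.

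For part (a) the perturbation enters the variational equation only through its Jacobian, giving $\dot w=\bigl(\tfrac{\partial f}{\partial x}+\tfrac{\partial d}{\partial x}\bigr)w$; differentiating $w^{T}Mw$, applying \eqref{econt}, and invoking the assumed bound $\|\partial d/\partial x\|\le\lambda$ forces the net rate to stay nonpositive, so $L(t)\to0$ and the convergence \eqref{lem21} follows. For part (b), only $\|d\|\le d$ is assumed, so $d$ acts as an additive forcing term in $\dot w$; passing to the length element $\sqrt V=\|\Theta w\|$ and using the Cauchy--Schwarz bound $w^{T}M d\le\sqrt V\,\|\Theta d\|$, I would derive the scalar comparison inequality $\tfrac{d}{dt}\sqrt V\le-\lambda\sqrt V+\|\Theta d\|$, integrate it in $s$ along the path, and apply the comparison lemma to the resulting linear differential inequality for $L(t)$. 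Letting $t\to\infty$ eliminates the transient $e^{-\lambda t}$ term and leaves the steady-state bound $\chi d/\lambda$ of \eqref{lem22}.

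The delicate point is the transfer from the infinitesimal, path-integral estimate to the finite distance $\|x_p(t)-x(t)\|$: this is where the conditioning of the metric must be handled with care, since bounding the Euclidean separation by the Riemannian length costs a factor $\|\Theta^{-1}\|$ and bounding the disturbance costs a factor $\|\Theta\|$, and only their product $\chi$ should survive in the final estimate. A secondary subtlety is making sure, in part (a), that the Jacobian perturbation does not overwhelm the contraction rate — the hypothesis $\|\partial d/\partial x\|\le\lambda$ is precisely the threshold that keeps the net rate from turning positive, so the estimate is tight and the boundary (semi-contracting) case must be argued rather than assumed to hold with slack.
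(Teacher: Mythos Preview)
The paper does not supply its own proof of this lemma: it is listed among the prerequisite results whose proofs ``can be found in \cite{lohmiller1,parrilo,sontag1}.'' There is therefore no in-paper argument to compare against; your proposal can only be measured against those external references and against internal consistency.

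Your treatment of part~(b) is the standard contraction-theory argument---virtual displacement, differential inequality for the Riemannian length element, comparison lemma, and conversion between the $\Theta$-norm and the Euclidean norm producing the factor $\chi$---and it is sound; this is essentially what one finds in the cited sources.

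Your treatment of part~(a), however, has a real gap. The variational equation you invoke, $\dot w=(\partial f/\partial x+\partial d/\partial x)\,w$, governs the separation of two nearby trajectories \emph{of the perturbed system}. It does not describe the distance between a trajectory $x(t)$ of the \emph{nominal} system and a trajectory $x_p(t)$ of the perturbed one, because $x(t)$ is not a solution of $\dot y=f(y,t)+d(y,t)$. To compare the two one must carry the additive term $d$ explicitly---exactly as you do in part~(b)---and a Jacobian bound on $d$ alone cannot force $\|x_p-x\|\to 0$. A one-line counterexample: take $f(x)=-2x$ (contracting with rate $\lambda=2$) and $d(x)\equiv 1$ (so $\partial d/\partial x=0\le\lambda$); the nominal trajectory tends to $0$ while the perturbed one tends to $1/2$. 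Thus part~(a) as literally stated is not correct without an additional hypothesis (for instance that $d$ vanish along the nominal trajectory), and your sketch, by appealing to the perturbed variational equation, does not repair this. You correctly sensed trouble at the semi-contracting boundary $\|\partial d/\partial x\|=\lambda$, but the deeper issue is the mismatch between the variational equation you use and the quantity $\|x_p-x\|$ you are trying to bound.
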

Apart from these properties contraction framework provides a very useful tool called partial contraction \cite{slotine6}, which find application in observer/filter design and synchronization problems.
\begin{lemma}\label{lem3}
A System $\dot{\it \bf x}=\it \bf f(\it \bf x, \it \bf y, t)$ is said to be partially contracting in $\it \bf x$ if an auxiliary system defined by $\dot{\it \bf z}=f(\it \bf z, \it \bf y,t)$ is contracting for any value of $\it \bf y, \forall t>0$. If the auxiliary system verifies a smooth specific property, then the trajectories of original system will verify that property exponentially. $\Diamond$
\end{lemma}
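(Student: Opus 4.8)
The plan is to exhibit the ``virtual'' or auxiliary system as the object on which the standard contraction machinery of \eqref{econt} acts, and then to recognize the true trajectory $x(t)$ as one of its particular solutions. First I would fix a solution $x(t)$ of the original system together with the corresponding signal $y(t)$, and feed this same $y(t)$ into the auxiliary dynamics $\dot{z}=f(z,y(t),t)$, thereby turning it into a bona fide time-varying system in the single variable $z$. By hypothesis the auxiliary system is contracting in $z$ for \emph{every} frozen value of $y$, hence in particular along the realized signal $y(t)$; that is, there is a uniformly positive definite metric $M$ and a rate $\lambda>0$ with $\dot M + M\,\partial f/\partial z + (\partial f/\partial z)^{T} M \le -2\lambda M$ holding for this $z$-system.

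Next I would invoke the basic convergence property of contracting systems. Writing the virtual displacement dynamics, the squared infinitesimal length $\delta z^{T} M\,\delta z$ satisfies $\frac{d}{dt}\!\left(\delta z^{T} M\,\delta z\right)\le -2\lambda\,\left(\delta z^{T} M\,\delta z\right)$, so any two solutions of the auxiliary system converge to one another exponentially at rate $\lambda$, forgetting their initial conditions (the same mechanism underlying Lemma~\ref{lem1} and Lemma~\ref{lem2}). The crucial observation is that $z(t)=x(t)$ is itself a solution of the auxiliary system, since $\dot{x}=f(x,y(t),t)$ by construction.

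Finally I would interpret the ``smooth specific property'' as the assertion that the auxiliary system admits a particular solution $z^{*}(t)$ verifying that property exactly (for instance $z^{*}$ lying on an invariant set, or realizing a desired estimate or reference). Since $x(t)$ and $z^{*}(t)$ are both solutions of the same contracting auxiliary system, the convergence result yields $\|x(t)-z^{*}(t)\|\to 0$ exponentially; provided the property is robust to such vanishing perturbations, $x(t)$ inherits it asymptotically, which is the claim.

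The main obstacle I anticipate is not the exponential estimate itself but the two points where the statement is informal. One must establish contraction \emph{uniformly} in the exogenous variable $y$, so that the realized and a priori unknown signal $y(t)$ is an admissible input; and one must formalize what it means for a trajectory to ``verify a smooth specific property'', casting this as the existence of a particular auxiliary solution $z^{*}(t)$ carrying the property together with continuity of the property under exponential convergence, so that $x(t)\to z^{*}(t)$ transfers the property to $x(t)$. Once these are pinned down, the argument reduces to the already-established contraction convergence of the single auxiliary system.
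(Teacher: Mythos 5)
Your proposal is correct and is essentially the standard argument for partial contraction: the paper itself offers no proof of this lemma (it defers to the cited literature, in particular the Wang--Slotine partial contraction result), and your proof reproduces exactly that classical reasoning. The key step --- observing that $x(t)$ is itself a particular solution of the auxiliary system driven by the realized signal $y(t)$, so that it converges exponentially to any other particular solution $z^{*}(t)$ carrying the desired property --- is precisely the mechanism the cited works use, and your two caveats (uniformity of contraction in $y$, and formalizing the ``property'' as a particular auxiliary solution) are the right places where the informal statement needs tightening.
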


\section{Control Law Formulation}
\subsection{Controller Design For Standard Models}
For the system \eqref{mot3}, the root of the fast z-subsystem \eqref{sys1sf} is given by:
\begin{equation}
\it \bf g(\it \bf x, \it \bf z,\mu, \it u)=0 \Rightarrow \it \bf z=\it \bf h(\it \bf x,\it u_1,\mu)
\end{equation}
where $\it u_1$ is the part of the control law which remains after the fast variables $\it \bf z$ of \eqref{sys1s} has reached their steady state. Denote $\it \bf z_{ds}=\it \bf h(\it \bf x, \it u_1)$ as the root of $g(\it \bf x, \it \bf z, \rm 0, \it u)=0$ and assume it to be smooth. The reduced slow system can be expressed as: 
\begin{equation}\label{redc1}
\it \bf \dot{x}=\it \bf f(\it \bf  x, \it \bf h(\it \bf x, \it u_1), \it u_1)
\end{equation}
 Define an auxiliary system
\begin{equation}\label{viro1}
\begin{split}
&\it \bf \mu \dot{z}_{ds}=\it \bf g(\it \bf x,\it \bf z_{ds},\rm 0,\it u)+\mu\frac{\upartial{\it \bf h}}{\upartial{\it \bf x}}\dot{\it \bf x}\\
\end{split}
\end{equation}
System \eqref{viro1} can be regarded as perturbed virtual (copy/auxiliary) system for $\it \bf z$. The extra term $\it \bf g(\it \bf x,\it \bf z_{ds},\rm 0,\it u)$ is added intentionally so that, \eqref{viro1} becomes a copy of \eqref{sys1sf}. The similarity between \eqref{sys1sf} and \eqref{viro1} is obvious when $\it \bf g(.)$ does not depend on $\mu$.\\
\emph{\textbf{Remark 1}:} The advantage of using contraction framework is that, we do not need any error dynamics (boundary layer system) for our analysis. The equations \eqref{redc1} and \eqref{viro1} play analogous role as the reduced system and boundary layer system in conventional Lyapunov based design. Our work is inspired from \cite{delv} where contraction analysis of singularly perturbed system is the main goal. However we are more interested to solve stabilization problem of singularly perturbed system and approximate feedback linearizable systems.
The main result of this section is stated in the following theorem.
\begin{theorem}\label{The1} 
Assume the following statements are true. \\
i) There exists a smooth function $\it u_1(\it \bf x)$ and a metric $\rm \Theta_x$ such that the reduced system \eqref{redc1} is contracting.\\
ii) There exists a control law $\it u_2(\it \bf x, \it \bf z)$ satisfying $||\it u_2|| \leq \it d_2||\it \bf z-\it \bf h(\it \bf x, \it u_1)||$ in $(\rm \bf B_x \times \rm \bf B_z)$ and a metric $\rm \Theta_z$such that the system $\mu\it \bf \dot{z}=\it \bf g(\it \bf x,\it \bf z,\mu,\it u_1+u_2)$ is partially contracting in $\it \bf z$.\\ 
iii) $||\frac{\upartial{\it \bf z_{ds}}}{\upartial{\it \bf x}}\it \bf f(\it \bf x,\it \bf z)|| \leq \it d_1$ in $(\rm \bf B_x \times \rm \bf B_z)$ where $\it d_1$ is a positive constant.\\
Then there exists a control law $\it u=\it u_1(\it \bf x)+\it u_2(\it \bf x, \it \bf z)$  such that \eqref{mot3} is contracting in $(\rm \bf B_x \times \rm \bf B_z)$ and it's trajectories follow the bounds given in \eqref{bd1} and \eqref{bd2}.\\
\end{theorem}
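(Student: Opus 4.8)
The plan is to decompose the closed-loop analysis along the two time scales, establishing contraction of the fast and slow pieces separately and then recombining them through the robustness result of Lemma~\ref{lem2}. The guiding principle is that assumptions (i)--(iii) supply exactly the ingredients needed to treat each subsystem as a perturbed copy of a contracting nominal flow, with \eqref{redc1} and \eqref{viro1} playing the roles of the reduced system and the boundary-layer system.

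First I would handle the fast variable $\bm z$. By assumption (ii) the system $\mu\dot{\bm z}=g(\bm x,\bm z,\mu,u_1+u_2)$ is partially contracting in $\bm z$ with some rate $\lambda_z$ in the stretched time $\tau=t/\mu$, so by Lemma~\ref{lem3} the associated auxiliary flow contracts in the metric $\Theta_z$ for every frozen value of $\bm x$. The virtual system \eqref{viro1} is, by construction, a copy of the fast dynamics driven by the extra forcing $\mu\,(\partial h/\partial\bm x)\dot{\bm x}$; hence $\bm z$ and $\bm z_{ds}$ are two trajectories of the same partially contracting vector field, the second of which is perturbed. Assumption (iii) bounds this forcing, $\|(\partial\bm z_{ds}/\partial\bm x)f\|\le d_1$, so Lemma~\ref{lem2}(b) yields a steady-state estimate for $\|\bm z-\bm z_{ds}\|$ proportional to $\chi_z\,\mu\,d_1/\lambda_z$, where $\chi_z$ is the condition number of $\Theta_z$; this is the claimed bound \eqref{bd2}. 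The point I would emphasize is that $\lambda_z$ lives in the fast time and does not degrade as $\mu$ grows, so the estimate holds on the whole interval $\mu\in[0,1]$ rather than only for small $\mu$.

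Next I would treat the slow variable $\bm x$. Write the actual slow flow as the reduced system \eqref{redc1} plus a residual $q=f(\bm x,\bm z,u_1+u_2)-f(\bm x,h(\bm x,u_1),u_1)$. Using Lipschitz continuity of $f$ together with the control bound $\|u_2\|\le d_2\|\bm z-h(\bm x,u_1)\|$ from assumption (ii), the residual is dominated by a constant multiple of $\|\bm z-\bm z_{ds}\|$, and hence by the fast estimate just obtained. Since \eqref{redc1} is contracting in $\Theta_x$ with rate $\lambda_x$ by assumption (i), a second application of Lemma~\ref{lem2}(b) produces the slow bound \eqref{bd1}. Combining the two metrics into the block metric $\Theta=\mathrm{blockdiag}(\Theta_x,\Theta_z)$ then shows that the full state $(\bm x,\bm z)$ is contracting on $\mathbb{B}_x\times\mathbb{B}_z$, and the product estimate follows from the two individual bounds.

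The main obstacle is the two-way coupling between the subsystems: the partial-contraction argument for $\bm z$ treats $\bm x$ as an exogenous input even though it is evolving, while the slow estimate feeds on the fast error. The technical care lies in (a) invoking Lemma~\ref{lem3} to convert partial contraction into genuine contraction of the auxiliary flow, so that the robustness result of Lemma~\ref{lem2} legitimately applies despite the time-varying $\bm x$, and (b) verifying that the perturbation entering \eqref{viro1}---which combines the $O(\mu)$ forcing with the difference between $g(\cdot,\mu,\cdot)$ and $g(\cdot,0,\cdot)$---is uniformly bounded via assumption (iii), so that the resulting bounds remain finite and valid across the full range of $\mu$ claimed by the theorem.
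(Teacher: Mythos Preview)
Your approach matches the paper's: treat \eqref{viro1} as a perturbed copy of the partially contracting fast flow, invoke Lemma~\ref{lem2} for the $\bm z$-bound, then push that bound through a Lipschitz estimate of the slow residual and apply Lemma~\ref{lem2} once more for the $\bm x$-bound. Two small corrections: you have the labels swapped (the fast estimate is \eqref{bd1} and the slow steady-state estimate is \eqref{bd2}), and the fast perturbation carries an additional Lipschitz constant $L_1$ from $g(\cdot,0,\cdot)-g(\cdot,\mu,\cdot)$, giving $\mu(d_1+L_1)$ rather than $\mu d_1$---a point you correctly flag in your final paragraph but should fold into the first estimate.
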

\begin{proof}
Adding and subtracting $\it \bf g(\it \bf x,\it \bf z_{ds},\rm \mu,\it u)$ in \eqref{viro1}, we get
\begin{equation}\label{virz2}
\Rightarrow \it \mu \bf \dot{z}_{ds}=\it \bf g(\it \bf x,\it \bf z_{ds},\rm \mu,\it u)+\mu\frac{\upartial{\it \bf h}}{\upartial{\it \bf x}}\dot{\it \bf x} +\it \bf g(\it \bf x,\it \bf z_{ds},\rm 0,\it u)-\it \bf g(\it \bf x,\it \bf z_{ds},\rm \mu,\it u)
\end{equation}
The virtual system \eqref{virz2} appears to be a perturbed form of the fast subsystem \eqref{sys1sf}. Using third condition of theorem \ref{The1} states, there exist a control input $u_2$ such that the nominal part $\it \bf \dot{z}_{ds}=\it \bf g(\it \bf x,\it \bf z_{ds},\rm \mu,\it u)$ is partially contracting in $\it \bf z$. From the Lipschitz assumption for $\it \bf g(.)$ in $\mu$, \\
\[||\mu\frac{\upartial{\it \bf h}}{\upartial{\it \bf x}}\dot{\it \bf x} +\it \bf g(\it \bf x,\it \bf z_{ds},\rm 0,\it u)-\it \bf g(\it \bf x,\it \bf z_{ds},\rm \mu,\it u)|| \leq \mu (\it L_1+ \it d_1).\]
where $\it L_1$ is Lipschitz constant. Using lemma \ref{lem2}, by exploiting the robustness property of contraction we can derive the error bound between trajectories of the original fast subsystem \eqref{sys1sf} and the desired slow manifold whose dynamics is described by \eqref{virz2}. The error bound can be expressed as,
\begin{equation}\label{bd1}
\begin{split}
&||\it \bf z(t)-\it \bf z_{ds}(t)|| \leq \chi_z e^{-\lambda_{z}t/\mu}||\it \bf z(\rm 0)-\it \bf z_{ds}(\rm 0)||
+ \frac{\mu \chi_z}{\lambda_{z}}(\it d_1 +\it  L_1)
\end{split}
\end{equation}
where $\chi_z$ is the condition number of contraction metric $\rm \bf \Theta_z$ and $\lambda_{z}$ is the contraction rate. Similar to the above argument a virtual system of x-subsystem  \eqref{sys1s} can be defined as:
\begin{equation}\label{vir2}
\begin{split}
&\it \bf \dot{x}_r=\it \bf f(\it \bf x_r,\it \bf h(\it \bf x_r, \it u_1),\it u_1)+\it \bf f(\it \bf x_r,\it \bf z, \it u)-\it \bf f(\it \bf x_r,\it \bf h(\it \bf x_r, \it u_1),\it u_1)\\
&\leq \it \bf f(\it \bf x_r,\it \bf h(\it \bf x_r, \it u_1),\it u_1)+\it \bf f(\it \bf x_r,\it \bf z, \it u)-\it \bf f(\it \bf x_r,\it \bf z, \it u_1)+\it \bf f(\it \bf x_r,\it \bf z, \it u_1)-\it \bf f(\it \bf x_r,\it \bf h(\it \bf x_r, \it u_1),\it u_1)\\
& \leq \it \bf f(\it \bf x_r,\it \bf h(\it \bf x_r, \it u_1),\it u_1) + \it L_u||\it u_2|| + \it L_2||\it \bf z-\it \bf z_{ds}||\\
& \leq \it \bf f(\it \bf x_r,\it \bf h(\it \bf x_r, \it u_1),\it u_1) + \it L_ud_2||\it \bf z-\it \bf z_{ds}|| + \it L_2||\it \bf z-\it \bf z_{ds}||
\end{split}
\end{equation}  
where $\it L_u$ and $\it L_2$ are Lipschitz constants of $\it \bf g(.)$ for $\it u$ and $\it \bf z$ respectively. The bound $||\it \bf z-\it \bf z_{ds}||$ is known from \eqref{bd1} and $\it L_u||\it u_2|| \rightarrow \rm 0$ exponentially as $\it \bf z\rightarrow \it \bf z_{ds}$.  Employing lemma \ref{lem2}, $||(\it \bf x-x_r||$ decreases exponentially and satisfies the following bound.
\begin{equation}\label{xbnd}
\begin{split}
& ||\it \bf x(t)- x_r (t)|| \leq \chi_x||\it \bf x(0)- x_r (0)||\rm e^{-\lambda_x t}+ \mu ( \it C_1(\rm e^{-\lambda_x t} - e^{\frac{-\lambda_{z}}{\mu} t} ) + \it C_2(1-\rm e^{-\lambda_x t}))  \\ 
\end{split}
\end{equation}
where $\chi_x$ is the condition number of contraction metric for the reduced system and $\lambda_{x}$ is contraction rate. The constants $\it C_1$ and $\it C_2$ are given by,
\[ \it C_1 = \frac{\chi_x \chi_z(\it L_2+L_ud_2) ||\it \bf z(0)-z_{ds}(0)|}{\lambda_z - \mu \lambda_x}, 
 \it C_2 = \frac{\chi_x \chi_z(\it L_2+L_ud_2)  (\it d_1+L_1)}{\lambda_z\lambda_x}\].
 The trajectories of \eqref{sys1s} exponentially converges to the following bound.
\begin{equation}\label{bd2}
\lim_{t \to \infty} ||\it \bf x(t)-x_r(t)|| \leq \mu\frac{\chi_x \chi_z(\it L_2+L_ud_2)  (\it d_1+L_1)}{\lambda_z\lambda_x}
\end{equation}
\end{proof}
\emph{\textbf{Remark 2}:} The fourth condition given in theorem 1 appears to be a conservative condition but as the variable $x$ is varying slowly for auxiliary system  \eqref{virz2}, this is not actually very restrictive one. If  $\it \bf x$ and $\it \bf z$ evolve inside bounded regions $(\rm \bf B_x \times \rm \bf B_z)$, then this condition is reasonable. So even if global stability can not be estabilished, semi-global stability can be achived using these results. This condition is essential since no interconnection condition has to be satisfied. \\
\emph{\textbf{Remark 3}:} The stability achieved is not asymptotic in nature, rather an ultimate bound is established which depends on the perturbation parameter. The error bounds \eqref{bd1} and \eqref{bd2} is valid for all $\mu \in [0,1]$ so the proposed controller can give a relative stability result depending on $\mu$.\\
\subsection{Robustness Issues}
Suppose the z sub-system \eqref{sys1sf} is disturbed by a uncertainty $\it \bf d(\it \bf x, z, \mu)$, which can be written as:
 \begin{equation}\label{fstp}
 \mu\it \bf \dot{z}_p=\it \bf g(\it \bf x, \it \bf z_p,\mu,\it u)+\it \bf d(\it \bf x, z, \mu) 
 \end{equation}
 where $||\it \bf d(.)||\leq \it d_b$. From theorem 1, the unperturbed part of \eqref{fstp} is partially contracting in $\it \bf z_p$. Following bound can be established using lemma 2.
 \begin{equation}
 ||\it \bf z(t)-z_{p}(t)|| \leq \chi_z \rm e^{-\lambda_{z}t}||\it \bf z(0)-z_{p}(0)|| + \frac{\mu \chi_z \it d_b}{\lambda_{z}}\\
 \end{equation}
 From triangle inequality,
 \begin{equation}\label{vir1}
||\it \bf z_p-z_{ds}|| \leq ||\it \bf z_p-z||+||\it \bf z-z_{ds}||
\end{equation}
Assuming the initial condition for $\it \bf z$ and $\it \bf z_p$ are same,
the bound for $||\it \bf (z_{p}-z_{ds})||$ can be reformulated as:
\begin{equation}\label{bd3}
\begin{split}
&||\it \bf z_{p}(t)-z_{ds}(t)|| \leq \chi_z \rm e^{-\lambda_{z}t}||\it \bf z(0)-z_{ds}(0)|| + \frac{\mu \chi_z}{\lambda_{z}}(\it d_1 + L_1+d_b)\\
\end{split}
\end{equation}
Replacing \eqref{bd3} with \eqref{bd2} in \eqref{vir2} we can obtain the steady state bounds for the overall system.
\subsection{Exponential Convergence}
There are certain cases when exponential convergence of the trajectories to equilibrium can be inferred rather than convergence to an ultimate bound. Assume the right hand side of the z- subsystem \eqref{sys1sf} is independent of $\mu$, which can be written in the following form.
\begin{subequations}\label{ac1}
 \begin{equation}\label{s1s}
 \dot{\bm x}=\it \bf f (\bm x, \bm z, \it u)
 \end{equation}
 \begin{equation}\label{s1sf}
 \mu\dot{\bm z}=\it \bf g (\bm x, \bm z, \it u)
 \end{equation}
 \end{subequations}
where $\mu \in [0, 1]$ and $\it \bf g(\it \bf x, \it \bf z,\it u)=0$ has a root denoted by $\it \bf z_{ds}=\it \bf h(\it \bf x,\it u_1)$. The functions $f(.)$, $g(.)$ and $h(.)$ are assumed to be smooth and Lipschitz in their arguments.
\begin{theorem}\label{The2}
If the following conditions ($i-iii$)are satisfied for \eqref{ac1}, then there exists a control law $\it u$ and a small constant $\mu^*$ such that the system is contracting in $(\rm \bf B_x \times B_z)$ for all $\mu \leq \mu^*$.\\
i) There exists a smooth function $\it u_1(\it \bf x)$ and a metric $\rm \Theta_x$ such that the reduced system \eqref{redc1} is contracting.\\
ii) There exists a control law $\it u_2(\it \bf x, \it \bf z)$ satisfying $||\it u_2|| \leq \it d_2||\it \bf z-\it \bf h(\it \bf x, \it u_1)||$ in $(\rm \bf B_x \times \rm \bf B_z)$ and a metric $\rm \Theta_z$such that the system $\mu\it \bf \dot{z}=\it \bf g(\it \bf x,\it \bf z,\it u_1+u_2)$ is partially contracting in $\it \bf z$.\\ 
iii)  $||\it \bf \frac{\upartial{Q(x,z)}}{\upartial{z}}|| \leq \it d_q$ where $\it \bf Q(x,z)= \frac{\upartial{\it \bf h(x,u_1)}}{\upartial{\it \bf x}}\it \bf f(\it \bf x,\it \bf z)$ and $\it d_q$ is a positive constant .
\end{theorem}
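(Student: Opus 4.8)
The plan is to reuse the virtual-system construction of Theorem~\ref{The1} and to exploit the new hypothesis that, for system~\eqref{ac1}, the fast field $\bm g$ does not depend on $\mu$; the payoff is that the fast auxiliary system becomes a \emph{Jacobian-bounded} perturbation, so that part~(a) of Lemma~\ref{lem2} applies instead of part~(b), and this is exactly what upgrades the ultimate-bound conclusion of Theorem~\ref{The1} to genuine contraction. First I would fix $u=u_1(\bm x)+u_2(\bm x,\bm z)$ with $u_1,u_2$ furnished by conditions~(i)--(ii) and form the auxiliary fast system \eqref{viro1}. Because $\bm g$ is now $\mu$-independent, the two terms $\bm g(\bm x,\bm z_{ds},0,u)-\bm g(\bm x,\bm z_{ds},\mu,u)$ appearing in \eqref{virz2} cancel, so the only discrepancy between the true fast subsystem \eqref{s1sf} and its nominal copy is the manifold-drift term $\mu\bm Q(\bm x,\bm z)=\mu\frac{\partial \bm h}{\partial \bm x}\dot{\bm x}$.

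The crucial step is the fast analysis. By condition~(ii) and Lemma~\ref{lem3} the nominal copy $\mu\dot{\bm z}=\bm g(\bm x,\bm z,u)$ is partially contracting in $\bm z$ with metric $\Theta_z$ and fast-time rate $\lambda_z$. I would treat $\mu\bm Q$ as a perturbation and, rather than bounding its norm as in Theorem~\ref{The1}, bound its Jacobian in $\bm z$: condition~(iii) gives $\|\mu\frac{\partial \bm Q}{\partial \bm z}\|\le\mu d_q$. Part~(a) of Lemma~\ref{lem2} then applies precisely when this Jacobian bound stays below the contraction rate, i.e. when $\mu d_q\le\lambda_z$, and this is where the threshold $\mu^{*}:=\lambda_z/d_q$ is born. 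For every $\mu\le\mu^{*}$ the perturbed fast system remains contracting, so $\bm z$ converges to the manifold $\bm z_{ds}$ exponentially and the residual floor that spoiled \eqref{bd1} in Theorem~\ref{The1} is removed.

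With the fast error contracting to zero, I would close the argument on the slow variable exactly as in \eqref{vir2}: the slow virtual system $\dot{\bm x}_r$ is the contracting reduced model \eqref{redc1} (condition~(i), metric $\Theta_x$) driven by the coupling $\bm f(\bm x_r,\bm z,u)-\bm f(\bm x_r,\bm h,u_1)$, which is Lipschitz-bounded by $(L_2+L_ud_2)\|\bm z-\bm z_{ds}\|$ after using $\|u_2\|\le d_2\|\bm z-\bm z_{ds}\|$. Since this coupling now decays exponentially rather than to an $O(\mu)$ floor, it is a vanishing perturbation, so Lemma~\ref{lem2}(a) together with Lemma~\ref{lem3} renders the slow virtual system contracting as well. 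Assembling the two blocks hierarchically---the fast block contracting at rate $\sim\lambda_z/\mu$ and feeding an exponentially decaying signal into the slow block contracting at rate $\lambda_x$---yields a contracting full system on $(\rm \bf B_x\times B_z)$ for all $\mu\le\mu^{*}$, and Lemma~\ref{lem1} then supplies the unique equilibrium to which all trajectories converge exponentially.

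The step I expect to be the main obstacle is this fast analysis, and in particular the clean identification of $\mu^{*}=\lambda_z/d_q$ through Lemma~\ref{lem2}(a). The subtlety is that the manifold-drift perturbation $\mu\bm Q$ enters through the \emph{actual} fast state $\bm z$---the target $\bm z_{ds}=\bm h(\bm x)$ moves because $\bm x$ is itself driven by $\bm z$---so it is condition~(iii)'s bound on $\partial\bm Q/\partial\bm z$, and not a bound on $\|\bm Q\|$ as in Theorem~\ref{The1}, that must be dominated by the fast contraction rate. A secondary difficulty is justifying the hierarchical combination rigorously: one must verify that the fast rate $\lambda_z/\mu$ dominates the rate at which the slow coupling acts (again forcing $\mu$ small, as in the requirement $\lambda_z>\mu\lambda_x$ visible in the constants $C_1,C_2$ of \eqref{xbnd}), so that in general $\mu^{*}$ may be the minimum of $\lambda_z/d_q$ and this slow-coupling threshold.
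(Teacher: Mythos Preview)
Your proposal is correct and follows essentially the same route as the paper: construct the fast virtual system, observe that $\mu$-independence of $\bm g$ leaves only the manifold-drift term $\mu\bm Q$, bound its $\bm z$-Jacobian via condition~(iii), invoke Lemma~\ref{lem2}(a) to obtain the threshold $\mu^{*}$ with $\mu^{*}d_q\le\lambda_z$, and then let the slow analysis of Theorem~\ref{The1} go through with a vanishing (rather than $O(\mu)$) coupling. The paper's proof is in fact terser than yours---it simply states the fast contraction step and then defers the slow part to ``the same steps as theorem~\ref{The1}''---so your additional care about the hierarchical combination and the possible second threshold is more scrupulous than what the paper records, but not a different argument.
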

\begin{proof}
The proof is similar to the proof of theorem \ref{The1}. 
 Define a virtual system as,
\begin{equation}\label{vi1}
\begin{split}
&\it \mu \bf \dot{z}_{ds}=\it \bf g(\it \bf x,\it \bf z_{ds},\it u)+\mu\frac{\upartial{\it \bf h}}{\upartial{\it \bf x}}\dot{\it \bf x}\\
& \Rightarrow \it \mu \bf \dot{z}_{ds}=\it \bf g(\it \bf x,\it \bf z_{ds},\it u)+\mu \it \bf Q(\it \bf x,z).
\end{split}
\end{equation}
The virtual system \eqref{vi1} can be considered as perturbed form of the fast subsystem \eqref{s1sf}. From the fourth condition of theorem \ref{The2},
\[\lVert {\mu \frac{\upartial{\it \bf Q(x,z)}}{\upartial{z}}}\rVert \leq \it \mu d_q.\]
Assume $\mu\it \bf \dot{z}=\it \bf g(\it \bf x,\it \bf z, \it u_1+u_2)$ is partially contracting in $\it \bf z$ with a rate $\lambda_{z}$. Therefore it is legitimate to assume a small constant $\mu^*$ such that,
\[\mu^*\it d_q \leq \lambda_{z}.\]
From the above argument and lemma \ref{lem2}, virtual system \eqref{vi1} is partially contracting in $\it \bf z$. Following lemma \ref{lem3} the trajectories of fast subsystem \eqref{s1sf} will follow the property of its virtual system \eqref{vi1} and hence will converge to $\it \bf z_{ds}$.
\begin{equation}\label{bnda1}
\lim_{t \to \infty} ||\it \bf z(t)-z_{ds}(t)|| \rightarrow \rm 0
\end{equation}
Rest of the proof follow the same steps as theorem \ref{The1}. Therefore from \eqref{vir2} and \eqref{xbnd}, the steady state bound for $\it \bf x -$ subsystem also converges to zero.
\end{proof}
\subsection{Controller Design for Nonstandrad Case}
 In this section we discuss controller design for nonstandard singularly perturbed models of the form \eqref{ac1}. Absence of an unique root brings about fundamental challenge in designing controllers for this class of systems. This bottleneck can be avoided, if the fast variable $z$ is treated as a virtual control variable in x-subsystem. Then a recursive design can be formulated to achieve closed loop stability of overall system. We follow a recursive indirect manifold construction approach to design control law for our purpose. The aim is to design a contracting controller in order to track a reference trajectory $\it \bf x_{r}(t)$. This contracting design procedure is divided into following steps.\\
\emph{Step 1:}  Defining an error signal  $\it \bf e(t)=x(t)-x_{r}(t)$, System \eqref{ac1} can be written as
  \begin{subequations}\label{syse1}
 \begin{equation}\label{err1}
 \it \bf \dot{e}=\it \bf f(\it \bf e+x_r, z, \it u)-\it \bf \dot{x}_r
 \end{equation}
 \begin{equation}\label{err2}
 \mu\it \bf \dot{z}=\it \bf g(\it \bf e+x_r, z, \it u)
 \end{equation}
 \end{subequations}
Selection of a control law using the same procedure as in standard model is not possible due to the absence of a unique root. To overcome complication, choose a virtual control variable $\it \bf z_{de}=\it \bf h(\it \bf e,\it \bf x_r,\it u, \it \bf \dot{x}_r)$ such that the reduced slow system given by:
\begin{equation}\label{reds}
\it \bf\dot{e}=\it \bf f(\it \bf e+x_r,z_{de},\it u)-\it \bf \dot{x}_r
\end{equation}  
is contracting in $\it \bf e$ with a metric $\bf \rm \Theta_{xe}$. The virtual control input $\it \bf z_{de}$ can be regarded as the desired slow manifold for \eqref{reds}, whereas $\it u$ will be decided later. 
This step is analogous to selection of slow component of control law in standard models. Similar to that step, the fast variable $\it \bf z$ is assumed to converge towards $\it \bf z_{de}$ here.  \\
\emph{Step 2:} The virtual control law selected in previous step depends explicitly on control input $u$ which is unknown. The control law $u$ is selected in such a way that the trajectories of $\it \bf z$ subsystem converges to $\it \bf z_{de}$. Define an error variable $e_z=z-z_{de}$ whose dynamics can be expressed as,
\begin{equation}\label{redf}
\mu\it \bf\dot{e}_z=\it \bf g\it \bf(e+x_r,e_z+z_{de},\it u)-\mu \it \bf \dot{z}_{de}.
\end{equation}
The unperturbed part of \eqref{redf} can be expressed as, 
\begin{equation}\label{redfu}
\mu\it \bf\dot{e}_{zu}=\it \bf g\it \bf(e+x_r,e_{zu}+z_{de},\it u)
\end{equation}
Subsequently search for a control law $\it u=\it u(\it \bf e_{zu},e,x_r)$ such that the closed loop dynamics of \eqref{redfu} is contracting in $\it \bf e_{zu}$ with a metric $\rm \bf \Theta_{ez}$. Using lemma \ref{lem1}, the trajectories of the unperturbed system will converge to a unique equilibrium point in the absence of the perturbation term.\\
\emph{\textbf{Remark 4}:} The indirect manifold construction approach utilizes center manifold theory for the selection of $u$ in  \eqref{redf}. Whereas our approach simplify the design procedure by considering the unperturbed part \eqref{redfu} only. Using lemma 2, contraction of \eqref{redfu} implies contraction of \eqref{redf} when the perturbation term $\it \bf \dot z_{de}$ is bounded. \\
Assuming $||\it \bf \dot z_{de}|| \leq d_e$ 
in $\rm \bf B_x \times B_z$. The bound for closed loop system trajectories $\it \bf e_{z}$ is given as:
\begin{equation}\label{bndns1}
||\it \bf e_z(t)-e_{zu}(t)|| \leq \chi_{ze} e^{-\lambda_{ez}t}||\it \bf  e_z(0)-e_{zu}(0)|| + \it \frac{\mu d_e \chi_{ze}}{\lambda_{ez}}
\end{equation}
where $\lambda_{ez}$ is the contraction rate for unperturbed system and $\chi_{ze}$ is the condition number of contraction metric $\rm \bf \Theta_{ze}$.\\
The dynamics of \eqref{err1} can also be written as:
\begin{equation}\label{redsp}
\it \bf \dot{e}_p=\it \bf f(e_p+x_r,z_{de},\it u)-\it \bf \dot{x}_r+ \it \bf f(e_p+x_r,z,\it u)-\it \bf f(e_p+x_r,z_{de},\it u)
\end{equation}
The unperturbed dynamics \eqref{reds} is contracting by suitable selection of $\it \bf z_{de}$ from step 1.  Following the same procedure from theorem 1, the distance between trajectories of \eqref{reds} and \eqref{redsp} satisfy the following bound.
\begin{equation}\label{bndns2}
\lim_{t \to \infty} ||\it \bf e(t)-e_p(t)|| \leq \frac{\mu \it d_z C_1 \chi_x \chi_z}{\it \lambda_x\lambda_{ez}} 
\end{equation}
where $\lambda_{xe}$ is contraction rate of reduced slow system \eqref{reds}, $L_e$ is the Lipschitz constant for $\it \bf f(.)$ in $\it \bf z$ and $\chi_{xe}$ is the condition number of the contraction metric $\rm \bf \Theta_{ex}$. The result is summarized as a theorem below.
\begin{theorem}\label{The3}
The trajectories of closed loop singularly perturbed nonlinear system \eqref{ac1} follow the bounds given in \eqref{bndns1} and \eqref{bndns2}, if the following conditions (i-iii) are met.\\
i) There exists a smooth function $\it \bf z=z_{de}(e,x_r,\dot{x}_r)$ such that the reduced system \eqref{reds} is contracting in $\it \bf e$.\\
ii) There exists a control law $\it u(\it \bf e_z,e,x_r)$ such that the system $\it \bf \dot{e}_{zu}=g(e+x_r,e_{zu}+z_{de},\it u)$ is contracting in $\it \bf e_z$.\\
iii) $||\it \bf \frac{\partial{z_{de}}}{\partial{x}}f(x,z)+\frac{\partial{z_{de}}}{\partial{x_r}}\dot{x}_r|| \leq \it d_e$\\
\end{theorem}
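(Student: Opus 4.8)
The plan is to reproduce the two-stage cascade of Theorem \ref{The1}, but now phrased entirely in the error coordinates $\it \bf e=x-x_r$ and $\it \bf e_z=z-z_{de}$, so that the nonexistence of a unique root of $\it \bf g$ never enters. First I would record the two contraction certificates already assembled in Step 1 and Step 2: condition (i) makes the reduced slow system \eqref{reds} contracting in $\it \bf e$ with metric $\rm \bf \Theta_{xe}$, rate $\lambda_{xe}$ and condition number $\chi_{xe}$, while condition (ii) makes the unperturbed fast error dynamics \eqref{redfu} contracting in $\it \bf e_{zu}$ with metric $\rm \bf \Theta_{ez}$, rate $\lambda_{ez}$ and condition number $\chi_{ze}$. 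No new construction is needed at this stage; the real content of the theorem is that these two pieces interconnect through only a bounded coupling.

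Next I would treat the true fast error dynamics \eqref{redf} as a perturbed copy of \eqref{redfu}. The deviation from the nominal contracting vector field is exactly $-\mu \it \bf \dot z_{de}$, and since $\it \bf \dot z_{de}=\frac{\partial z_{de}}{\partial x}\it \bf f+\frac{\partial z_{de}}{\partial x_r}\it \bf \dot x_r$ (the reference-acceleration term being absorbed by assuming $\it \bf \ddot x_r$ bounded), condition (iii) furnishes $||\it \bf \dot z_{de}||\leq d_e$. Applying Lemma \ref{lem2}(b) with perturbation magnitude $\mu d_e$ then delivers the ultimate bound \eqref{bndns1}. Because the nominal system carries the factor $1/\mu$, its effective contraction rate is $\lambda_{ez}/\mu$, so the steady-state radius comes out proportional to $\mu$; this is precisely what lets the estimate hold for every $\mu\in[0,1]$ with no smallness requirement, in contrast with Theorem \ref{The2}.

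Finally I would close the cascade on the slow channel. Rewriting \eqref{err1} in the split form \eqref{redsp}, the nominal part $\it \bf f(e_p+x_r,z_{de},u)-\it \bf \dot x_r$ is the contracting reduced system of condition (i), while the residual $\it \bf f(e_p+x_r,z,u)-\it \bf f(e_p+x_r,z_{de},u)$ is a disturbance bounded by $L_e||\it \bf z-z_{de}||=L_e||\it \bf e_z||$ through the Lipschitz property of $\it \bf f$ in $\it \bf z$. Substituting the bound \eqref{bndns1} for $||\it \bf e_z||$ and invoking Lemma \ref{lem2}(b) a second time yields the asymptotic bound \eqref{bndns2}, which completes the argument.

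The hard part will be the last splice. The two Lemma \ref{lem2} estimates live on different time scales---the fast one relaxing on the $\mu$-compressed clock and the slow one on the $O(1)$ clock---so feeding the decaying transient of $||\it \bf e_z||$ into the slow disturbance produces the mixed two-exponential behaviour already visible in \eqref{xbnd}. Pinning down the constant in \eqref{bndns2} rigorously therefore requires integrating the slow comparison system against a time-varying, exponentially shrinking disturbance and checking that the $\mu$ factor survives in the steady-state term while the transient cross terms stay bounded. A secondary subtlety is guaranteeing that $\mu\it \bf \dot z_{de}$ is uniformly bounded on $\rm \bf B_x\times B_z$; this is exactly where the restriction to compact sets noted in Remark 2 is silently used, so the resulting statement is semi-global rather than global.
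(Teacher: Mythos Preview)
Your proposal is correct and follows essentially the same two-stage cascade as the paper: treat \eqref{redf} as a perturbation of the contracting system \eqref{redfu} with disturbance $\mu\dot z_{de}$ bounded via condition (iii) to obtain \eqref{bndns1}, then feed that bound into the split form \eqref{redsp} of the slow error and invoke Lemma~\ref{lem2} again to reach \eqref{bndns2}. If anything you are more careful than the paper, which simply writes ``Following the same procedure from theorem~1'' for the last splice and does not spell out the mixed-exponential integration or the semi-global caveat you flag.
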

\emph{\textbf{Remark 5}:} The indirect manifold construction based controller design is applicable for both standard and non-standard singularly perturbed problem. However the requirement for non-standard case is more conservative due to the fact that the control law $u$ must ensure contraction of \eqref{redfu} in $\it \bf e_{zu}$ whereas the requirement is partial contraction in standard case.\\

\section{High Gain Scaling for Approximate Feedback Linearizable Systems}
 In the previous sections, we discussed the stabilization of singularly perturbed systems. However, many classes of nonlinear systems which does not inherently possess a time scale separation can forcibly be converted into a singularly perturbed form by high gain feedback. There are certain class of systems for which the assumption of theorem \ref{The1} and \ref{The2} are not needed because of their inherent structure.  Contractive controller design for strict feedback and parametric strict feedback form nonlinear systems are investigated in \cite{sharma1,zamani1}. In this section we take up contractive controller design for approximate feedback linearizable systems which are in the following form. \par
\begin{equation}\label{sys1}
\begin{split}
& \dot{x}=f(x,z)\\
\end{split}
\end{equation}
\begin{equation}\label{rec1}
\begin{split}
& \dot{z}_1 = g_{11}(z_1)+b_1z_2+g_{31}(x,z)\\
& \dot{z}_2 = g_{12}(z_1,z_2)+b_2z_3+g_{32}(x,z)\\
& \dots\\
& \dot{z}_m = g_{1m}(z_1,z_2,z_3....z_m)+b_mu+g_{3m}(x,z)\\
\end{split}
\end{equation}
  $\quad$ where $x \in D_x \subset R^n, z \in D_z \subset R^m$ and  $f(0,0)=0, g_{3j}(0,0)=0$. 
 For $j=1,2 \dots m, g_{1j}:R^j \rightarrow R$ are smooth functions and $b_j$  are positive constants. The sub-system  \eqref{rec1} is in parametric strict feedback form in the absence of 
 $g_3=[g_{31}, g_{32}, \hdots, g_{3m} ]^T$. To derive the results it is assumed that
$||f(0,z)|| \leq c_1||z||$ and $||g_3(0,z)|| \leq c_2||z||$ where $c_1,c_2$ are positive constants.
  This assumption is not conservative in nature and is true for many cases such as flexible link manipulators. The usual backstepping method will  not work for the class of systems considered here due to the presence of $g_3$. The controller design algorithm is divided into two steps. The dynamical systems \eqref{sys1} and \eqref{rec1} are transformed into a singularly perturbed form through a high gain scaling. Then a control law is selected to stabilize the transformed system.
\subsection{Transformation to Singularly Perturbed Form}
 In the absence of $g_3$, \eqref{rec1} is in parametric strict feedback form. Suppose a control law  is selected as \cite{sharma1}: 
 \begin{equation}\label{fbcont1}
u=\frac{1}{b_m}[-g_{1m}(z)+
\sum_{k=1}^{m-1}{\frac{\partial{\alpha_{m}}}{\partial{z_k}}}\dot{z}_k
+u_1]
\end{equation}
\begin{equation}
\begin{split}
&\hat{\xi}_i=z_{i}-\alpha_i,\quad \quad i \in [1,\hdots,m]\\
& \alpha_1=0\\
&\alpha_{i}=\frac{1}{b_{i-1}}[-g_{i-1}(x_1,..x_{i-1})
+\sum_{k=1}^{i-2}{\frac{\partial{\alpha_{i-1}}}{\partial{z_{k}}}}\dot{z}_{k} 
]\quad(\text{for} i \geq 2)\\
\end{split}
\end{equation}
The closed loop system is transformed into a Brunovsky canonical form \eqref{sysbfinal1}.  \par
\begin{equation}\label{sysbfinal1}
\dot{\hat{\xi}}=A\hat{\xi}+Bu_1
\end{equation} 
$A=\begin{bmatrix}&0 &b_1 &0 &\dots &0\\&0 &0 &b_2 &\dots &0\\ \hdotsfor[4]{6}\\&0 &0 &0 & \dots &b_{m-1}\\&0 &0 &0 &\dots &0\\\end{bmatrix},B=\begin{bmatrix}0\\0\\\dots\\0\\1\end{bmatrix}\\$\\
Define a new set of variables scaled by a large positive constant $k$ as:
\begin{equation}
\begin{bmatrix}\eta\\ \dots \\ \xi
\end{bmatrix}=
\begin{bmatrix}k^{m-1}x \\\dots\\ K [\hat{\xi}_1, \hat{\xi}_2,......,\hat{\xi}_{m}]^T
\end{bmatrix}
\end{equation}
 where $K=diag\{k^{m-1}, k^{m-2} ..... k,1\}$. The dynamics  \eqref{sys1} and \eqref{rec1} in terms of new variables are:
\begin{equation}\label{syst}
\dot{\eta}=k^{m-1} f(x,z)=F(\eta,\xi)
\end{equation}
\begin{equation}\label{sysft}
\begin{split}
&\dot{\xi}=K(A\hat{\xi}+Bu_1)+\bar{g}_3(\eta,\xi)\\
&=KAK^{-1}\xi+KBu_1)+\bar{g}_3(\eta,\xi)\\
&=kA\xi+Bu_1+\bar{g}_3(\eta,\xi)\\
\end{split}
\end{equation}
where \[ \bar{g}_3(\eta,\xi)=Kg_3(x,z)|_{x=\frac{\eta}{k^{m-1}}, z_i=\hat{\xi}_{i}+\alpha_i , \hat{\xi}=K^{-1}\xi}.\]
The transformed system can be expressed as a singularly perturbed system in the following form.
\begin{equation}\label{newtr1}
\begin{split}
&\dot{\eta}=F(\eta,\xi)\\
&\mu\dot{\xi}=A\xi+Bu_1+\mu\bar{g}_3(\eta,\xi)
\end{split}
\end{equation}
We will prove that, there exist a control input $u_1$ such that the closed loop system trajectories of \eqref{newtr1} are contracting in ($D_x \times D_z$). The result is stated in the following theorem.
\subsection{Formulation of Control Law}

\emph{\bf{Theorem 4}}: Suppose the following conditions (i-ii) are true for \eqref{syst} and \eqref{sysft} :\par
i)  There exists a function $\xi_1=\rho(\eta)$ such that the system $\dot{\eta}=F(\eta,[\rho(\eta),0, \dots, 0])$ is contracting.\par
ii)  $||\frac{\partial{\bar{g}_3}}{\partial{\xi}}|| \leq \frac{\lambda_{max}[G]}{\mu^2}\ \text{in} \ (D_x\times D_z)$ and $\mu=\frac{1}{k} $ \\
Then there exists a smooth  control law $u$ \\
\begin{equation}\label{ctrl}
\begin{split}
&u=\frac{1}{b_m}[-g_{1m}+\frac{\partial{\alpha_{m}}}{\partial{z_1}}\dot{z}_1+
\sum_{k=1}^{m-1}{\frac{\partial{\alpha_{m}}}{\partial{z_k}}}\dot{z}_k
+u_1]\\
&u_1=k\begin{bmatrix}&-a_1,&-a_2,&\dots&-a_m\end{bmatrix}\xi+ k\ a_1 \ \rho(\eta)\\
\end{split}
\end{equation}
 such that overall closed loop system is contracting.\\
\textbf{Proof:}
 A Hurwitz matrix $G$ is selected as  \\
\begin{equation}
G=\begin{bmatrix}0&b_1&0& \dots & 0\\0&0&b_2& \dots & 0\\ \hdotsfor[4]{5}\\0 &0 &0 &\dots &b_{m-1}\\
-a_1&-a_2&-a_2& \dots & -a_m\end{bmatrix}
\end{equation}
where $a_1,a_2,\dots,a_m$ are suitable positive constants for the stability of $G$. These scalars always exist due to the  companion structure of matrix $G$. Using the control law \eqref{ctrl}, the closed loop fast sub-system of \eqref{newtr1} can be expressed as:
 \begin{equation}\label{sysft2}
\mu\begin{bmatrix}\dot{\xi}_1 \\ \dot{\xi}_2 \\ \dot{\xi}_3 \\ \dots \\
\dot{\xi}_{m-1} \\ \dot{\xi}_{m} \end{bmatrix} = G
\begin{bmatrix}
\xi_1 \\
\xi_2 \\
\xi_3 \\
\dots \\
\xi_{m-1} \\
\xi_{m}\\
\end{bmatrix} + \begin{bmatrix}0\\0\\0\\\dots\\0\\a_1 \rho(\eta)\end{bmatrix}+\mu \begin{bmatrix} \bar{ g}_3 \end{bmatrix}
\end{equation}
 Using Lemma-3, it can be concluded that \eqref{sysft2} is partially contracting in $\xi$ with identity metric if the second condition of theorem 4 is met. Moreover the rate at which \eqref{sysft2} is partially contracting is given by  $\frac{\lambda_{max}(G)}{\mu}-||\mu\frac{\partial{\bar{g}_3}}{\partial{\xi}}||$. Following Lemma 1, the fast subsystem \eqref{sysft2} will converge to a slow manifold given by
\begin{equation}\label{sysp1}
\begin{bmatrix}
&\xi_1 
&\xi_2 
&\hdots 
\xi_{m}
\end{bmatrix}^T  = \begin{bmatrix} &\rho(\eta) & 0  & \hdots &0 \end{bmatrix}^T 
\end{equation}
Substituting $\xi_1=\rho(\eta)$ and $\xi_2,\xi_3,\dots \xi_n=0$ in \eqref{syst}, the reduced slow system becomes 
\begin{equation*}
\dot{\eta}=F(\eta,[\rho(\eta),0,0....0])
\end{equation*}
 From first condition of theorem 4, the reduced system is contracting. $\Box$\\
  \textbf{Remark 5:}
 The slow manifold \eqref{sysp1} is derived by putting $\mu=0$ in \eqref{sysft2}, but for nonzero values of $\mu$ the original slow system will not evolve in same manner as the reduced system. The error bound between the state $\xi_1$ and $\rho(\eta)$ can be derived using Lemma-3. 
Assume the transformed subsystem \eqref{sysft2} is Lipschitz in $\mu$ with a constant $c_4$ and $||\frac{\partial{\rho(\eta)}}{\partial{\eta}}F(\eta,\xi)|| \leq c_5$. Using Theorem 1, the error bound for fast subsystem \eqref{sysft2} can be given as
\begin{equation}\label{stab1}
\begin{split}
&  \lim_{t \to \infty} ||\begin{bmatrix}
\xi_1 \\
\xi_2 \\
\vdots \\
\xi_{m}
\end{bmatrix} - \begin{bmatrix} \rho(\eta) \\ 0  \\ \vdots \\0 \end{bmatrix} || 
 \leq 
  \frac{c_4+c_5}{\frac{\lambda_{max}G}{\mu^2}-||\frac{\partial{\bar{g}_3}}{\partial{\xi}}||} 
\end{split}
\end{equation}
Similarly, error bounds for slow subsystem can also be computed following same steps as theorem 1. Note that
  the stability bounds can be changed according to design goal because $\lambda_{max}(G)$ and $\mu=\frac{1}{k}$ are controller parameters which gives certain amount of relaxation in controller design.\\
 \section{Discussion and Comparison on Simulated Examples}
 \subsection{Stabilization of a D.C Motor}
 Consider an example of d.c motor system described as\cite{nar1}.
 \begin{equation*}\label{dc}
 \begin{split}
 &\dot{x}=-6.39x+6.39z^2\\
 &\mu\dot{z}=-z-[\mu\omega_0]xz+[1+\mu\omega_0]u
 \end{split}
 \end{equation*} 
 where $x$ is the angular velocity, $z$ is the current and $\omega_0=25 \ rad/sec$. The task is to attain a steady state value of one for the slow state $x$. The root of the fast subsystem is given by $z_{ds}=u_1$. The dynamics of reduced system is given by
 \begin{equation*}
 \dot{x}=-6.39x+6.39z_{ds}^2
 \end{equation*}
 For the choice of $z_{ds}=u_1=1$, the reduced system is contracting in $x$. With this choice of $u_1$, the trajectories of the reduced system will converge to an equilibrium point $x=1$. Now $u_2$ is selected as:
 \begin{equation*}
 u_2=\frac{\mu\omega_0}{1+\mu\omega_0}[xz-x]
 \end{equation*}
 It is important to note that $u_2\rightarrow 0$ when $z\rightarrow z_{ds}$. With an overall control law $u=u_1+u_2$, the closed loop fast subsystem is expressed as
 \begin{equation*}
 \mu\dot{z}=-z-\frac{\mu\omega_0}{1+\mu\omega_0}x+[1+\mu\omega_0]
 \end{equation*}
 which is partially contracting in $z$.  In this case $\dot{z}_{ds}=0$ so we do not need to calculate the bound for fourth assumption in theorem 1. The closed loop system is simulated for $\mu=0.1$ which is much larger compared to $\mu=0.02$ obtained from composite controller design in \cite{nar1}. Figure \ref{dcm1} confirms the convergence of slow state $x$ to its desired value one.\par
 \begin{figure}[htbp]
     \centering
      \includegraphics[width=3.25in,height=2.20in]{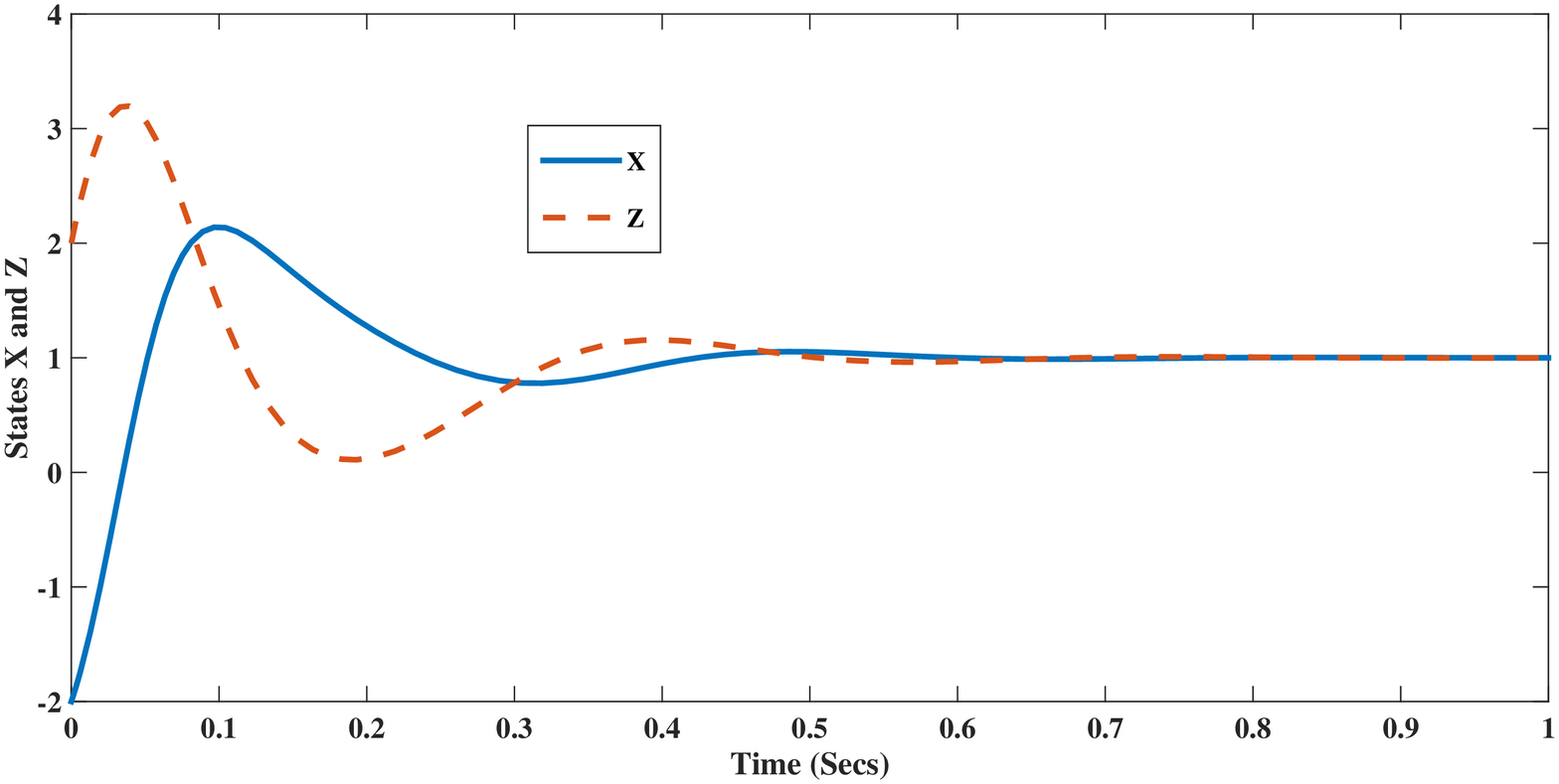}
     \caption{ Closed Loop System Trajectories of D.C Motor for $\mu=0.1$}
      \label{dcm1}
\end{figure} 

\subsection{A Nonstandard Case}
Consider a regulation problem for a nonstandard singularly perturbed system in the form of:
 \begin{equation*}
 \dot{x}=tan (z)-u, 
 \end{equation*}
 \begin{equation*}
 \mu\dot{z}= x +u
 \end{equation*}
 for this problem a choice of $z_{ds}=\tan^{-1} (-x+u)$ will achieve contraction of x-subsystem. Now define $e_z=z-z_{ds}$, the error dynamics can be expressed as
  \begin{equation}\label{ex2f}
 \mu\dot{e}_z= x +u-\mu\dot{z}_{ds}
 \end{equation}
  Select a region $B_z$ in which $\dot{z}_{ds}$ is bounded by some positive constant. If a control law $u=-x-e_z$ is chosen, then the unperturbed part of \eqref{ex2f} is contracting in $B_z$. From theorem 3, $e_z$ will converge to a small neighborhood around origin. Approximating the limiting value of $e_z \rightarrow 0$, the slow manifold can be expressed as $z_{ds}=\tan^{-1} (-2x)$. The control law to be implemented is given by $u=-x-z+z_{ds}$ which is given by:
 \begin{equation*}
 u=-x-z+\tan^{-1} (-2x)
 \end{equation*}
 The simulation results for this control law with a perturbation parameter $\mu=0.2$ is given in Figure \ref{Example ns}.\\
 \begin{figure}[htbp]
     \centering
      \includegraphics[width=3.25in,height=2.20in]{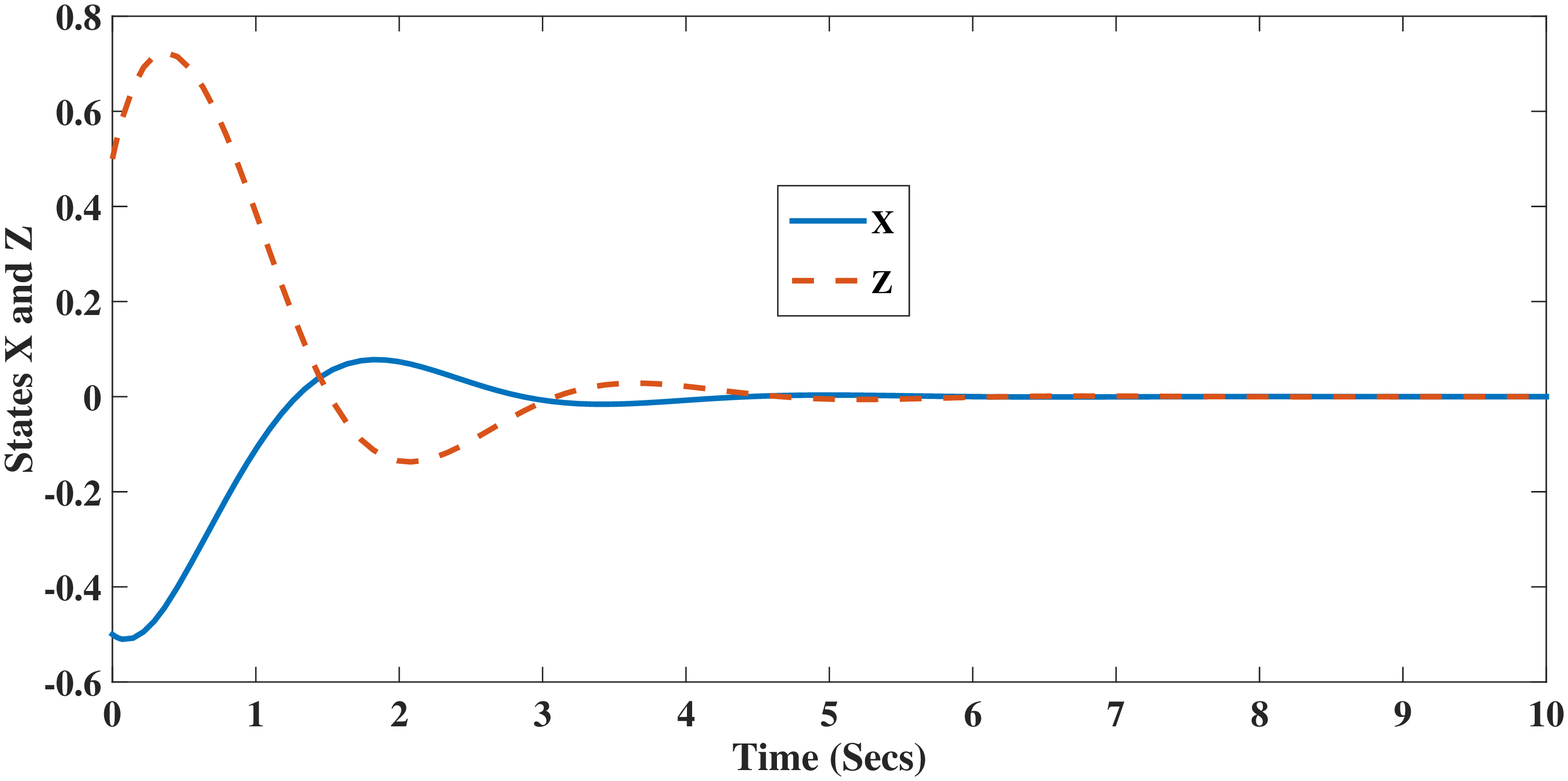}
      \caption{ Convergence of System Trajectories for $\mu=0.2$}
      \label{Example ns}
\end{figure} 
From this figure, it can be concluded that the control law proposed in theorem 3 is able to provide bounded tracking performance for nonstandard singularly perturbed systems. Although the magnitude of $\mu$ has a direct effect on the magnitude of steady state error, the convergence is guaranteed.\\
 \subsection{High Gain Scaling}
 \begin{equation}\label{ex1}
 \begin{split}
 & \dot{x}=x^2+z_1+xz_2\\
 & \dot{z}_1=x sin z_2+ z_2\\
 & \dot{z}_2=u
 \end{split}
 \end{equation}
 Applying recursive control design for strict feedback part(z-subsystem) and defining new variables as $\alpha_1=0,\hat{\xi}_1=z_1,\alpha_2(z_1)=-z_1,\hat{\xi}_2=z_2-\alpha_2$, the dynamics becomes
 \begin{equation*} \begin{bmatrix}\dot{\hat{\xi}}_1\\\dot{\hat{\xi}}_2\end{bmatrix}=\begin{bmatrix}\hat{\xi}_1\\u+z_2
 \end{bmatrix}+\begin{bmatrix}xsinz_2\\0\end{bmatrix}
 \end{equation*}
Define a new transformation,
 \begin{equation*}
\begin{bmatrix}\eta\\   \xi
\end{bmatrix}=
\begin{bmatrix}k x \\ K [\hat{\xi}_1, \hat{\xi}_2]
\end{bmatrix}
\end{equation*}
 where $K=diag\{ k,1\}$. System \eqref{ex1} is transformed to
\begin{equation} \begin{bmatrix}\dot{\eta}\\\dot{\xi}_1\\\dot{\xi}_2\end{bmatrix}=
\begin{bmatrix}\frac{1}{k}\eta^2+\xi_1+\eta\xi_2\\k\xi_2\\u\end{bmatrix}+\begin{bmatrix}
0\\\eta sin(\xi_2)\\0\end{bmatrix}
\end{equation}
To design the control law $u$, choose $a_1=a_2=2$ and $\rho(\eta)=(\frac{-1}{k}\eta^2-\eta)$ in \eqref{ctrl} such that
\begin{equation}\label{uex1}
\begin{split}
&u=-2k\xi_2-2k\xi_1-2k(\frac{-1}{k}\eta^2-\eta)\\
\end{split}
\end{equation}
 Denote $\mu=\frac{1}{k}$ and the fast subsystem of closed loop system reduces to
\begin{equation}\label{transex1}
\begin{split}
& \mu \begin{bmatrix}\dot{\xi}_1\\\dot{\xi}_2\end{bmatrix}=
\begin{bmatrix}\xi_2\\(-2\xi_1-\xi_2-2(\frac{-1}{k}\eta^2-\eta))\end{bmatrix}+\mu \begin{bmatrix}
\eta sin(\xi_2)\\0\end{bmatrix}
\end{split}
\end{equation}
 System \eqref{transex1}  is partially contracting in a set $S \subset R^n$, if $\mu||\frac{\partial}{\partial{\xi_2}}{\eta sin(\xi_2)}|| \leq ||\frac{\lambda_{max}G}{\mu}||$ in $S$. Therefore the slow manifold can be described as,
\begin{equation}
\begin{split}
& \begin{bmatrix}\xi_1\\\xi_2\end{bmatrix}=
\begin{bmatrix}\frac{-1}{k}\eta^2-\eta\\0\end{bmatrix}\\
\end{split}
\end{equation}
Substituting the value of $\xi_1$, the slow subsystem $ \dot{\eta}=\frac{1}{k}\eta^2+\xi_1+\eta\xi_2 $  reduces to $\dot{\eta}=-\eta$ which is also contracting. The simulations for the closed loop system is illustrated in figure 4 and 5. A gain of $k=10$ and initial conditions $[-1,1,0]$ are used for simulation.  The peaking phenomenon observed in control law is due to the high gain feedback. It can be reduced by using a saturater of desirable magnitude.\\
\emph{\textbf{Remark 6}:}The stability guaranteed for a broad range of $\mu \in [0,1]$ rather than a restrictive maximum value. This perspective gives more freedom in the choice of the high gain parameter as $\mu=\frac{1}{k}$.  In other words, the controller can achieve stabilization of the closed loop system using less control effort. However the contraction rate and error bounds will be different for different values of $k$. The closed loop system trajectories is shown in Figure \ref{Example 1 ek} for $k=4.5$. Another advantage of contraction based control design is that, the bound of error \eqref{stab1} between fast subsystem and slow manifold, ($\xi_1-(\rho(\eta))$)  can be changed by changing the parameters of control law \eqref{uex1}. The choice of matrix $G$ has a direct effect on the stability bounds \eqref{stab1} because its maximum eigen value decides the steady state error for the closed loop system. 
\begin{figure}[htbp]
     \centering
      \includegraphics[width=3.25in,height=2.20in]{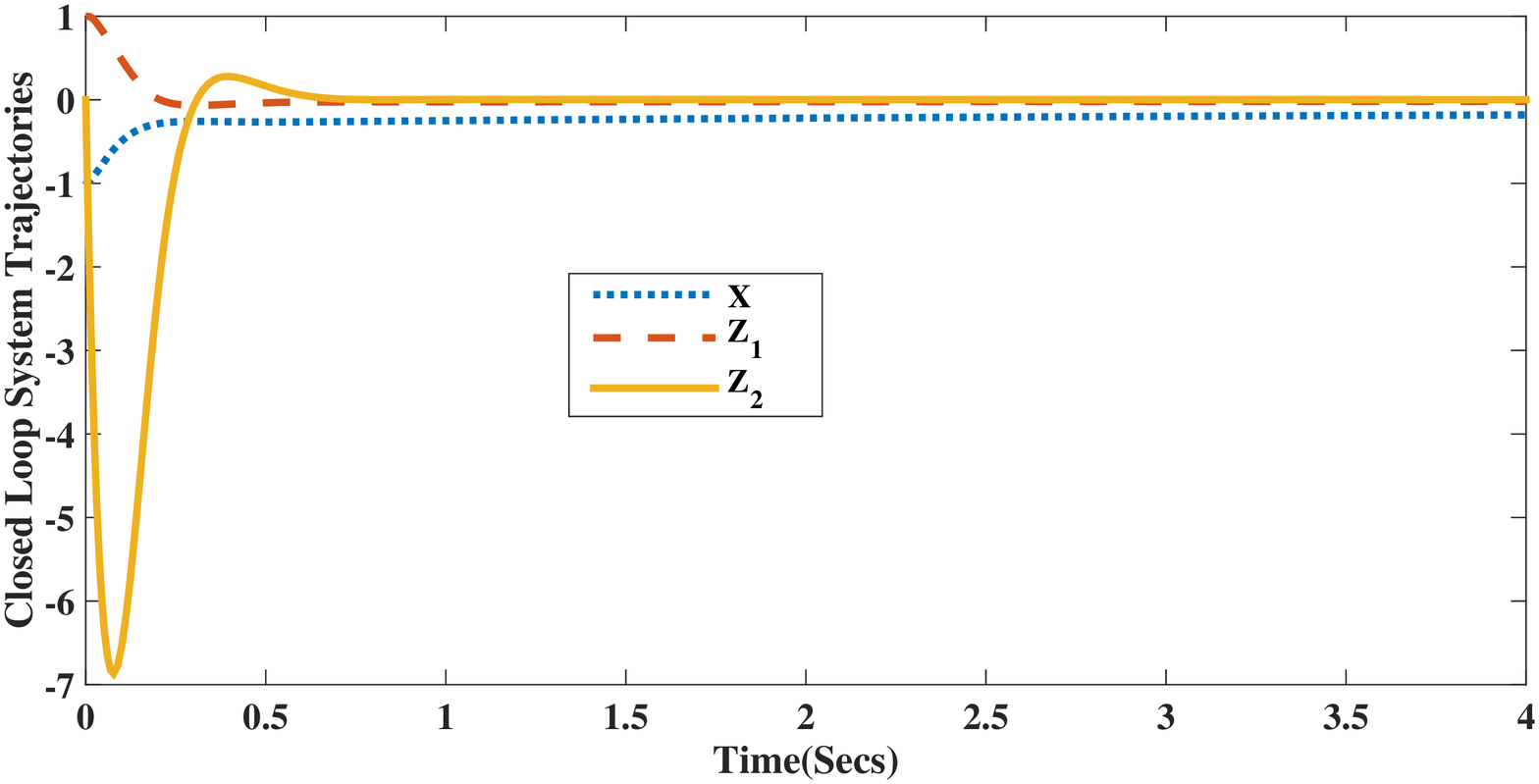}
      \caption{ Closed Loop System for k=10}
      \label{Example 1}
\end{figure} 
\begin{figure}[htbp]
     \centering
   \includegraphics[width=3.25in,height=2.20in]{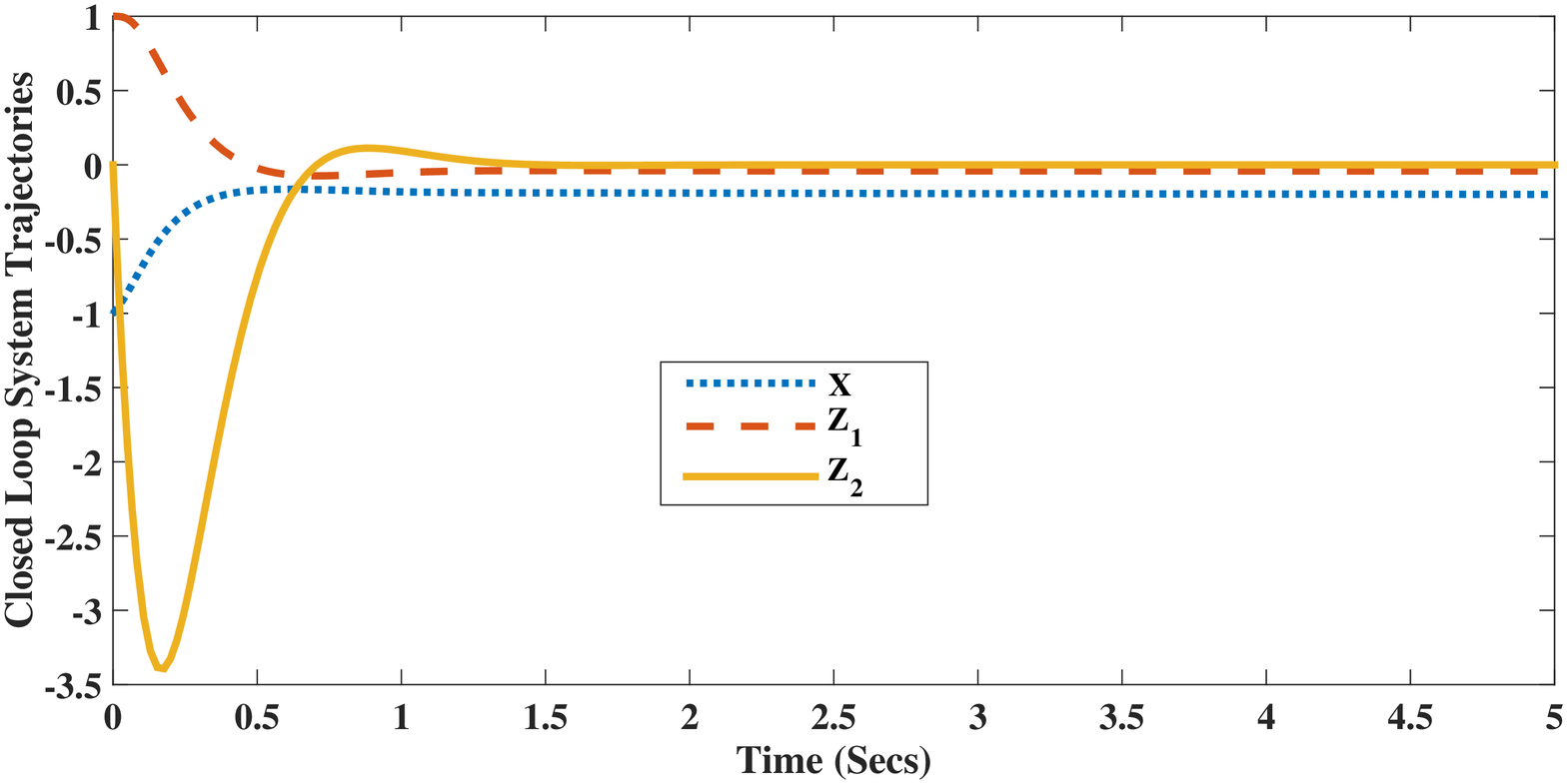}
      \caption{ Closed Loop System for k=4.5}
      \label{Example 1 ek}
   \end{figure} \\
\section{CONCLUSIONS}
A new approach for stabilization of singularly perturbed system is formulated based on contraction theory. The controller design formalism does not require any interconnection conditions. The trajectories of the closed loop system converge to an ultimate bound irrespective of the magnitude of perturbation parameter. Moreover an exponential convergence of trajectories  can also  be achieved under certain restrictions.  The proposed design framework is extended to develop a high gain based control law for approximate feedback linearizable systems. The design methodology presented here can assure ultimate boundedness of trajectories even if the Lyapunov based bound on perturbation parameter is breached due to some design constraints. The methodology presented here provides some relaxation in the choice of high gain parameter and can be useful to high gain observers  for nonlinear systems.


\bibliographystyle{cite}

\bigskip

\end{document}